\documentclass[11pt,letterpaper,draftcls,oneside,peerreview,onecolumn]{IEEEtran}
\usepackage{amsfonts}
\usepackage{amssymb}
\usepackage{amsmath}
\usepackage[dvips]{graphicx}
\usepackage{cite}
\usepackage{balance}
\usepackage{caption}

\setcounter{MaxMatrixCols}{10}

\newtheorem{theorem}{Theorem}

\newtheorem{definition}{Definition}

\newtheorem{remark}{Remark}

\begin{document}

\title{Error Performance of Multidimensional Lattice Constellations - Part I: A Parallelotope Geometry Based Approach for the AWGN Channel}

\author{Koralia N. Pappi, \IEEEmembership{Student Member, IEEE,}  Nestor D. Chatzidiamantis, \IEEEmembership{Student Member, IEEE,} and George~K.~Karagiannidis, \IEEEmembership{Senior Member, IEEE}

\thanks{The authors are with the Electrical and Computer Engineering
        Department, Aristotle University of Thessaloniki, GR-54124 Thessaloniki,
        Greece (e-mails: \{kpappi, nestoras,
        geokarag\}@auth.gr).}
\thanks{This work has been presented in part at the 2012 IEEE Wireless Communications and Networking Conference (WCNC), Paris, France, April 2012.}} \maketitle

\begin{abstract}
Multidimensional lattice constellations which present signal space
diversity (SSD) have been extensively studied for single-antenna
transmission over fading channels, with focus on their optimal
design for achieving high diversity gain. In this two-part series
of papers we present a novel combinatorial geometrical approach
based on parallelotope geometry, for the performance evaluation of
multidimensional finite lattice constellations with arbitrary
structure, dimension and rank. In Part I, we present an analytical
expression for the exact symbol error probability (SEP) of
multidimensional signal sets, and two novel closed-form bounds,
named Multiple Sphere Lower Bound (MLSB) and Multiple Sphere Upper
Bound (MSUB). Part II extends the analysis to the transmission
over fading channels, where multidimensional signal sets are
commonly used to combat fading degradation. Numerical and
simulation results show that the proposed geometrical approach
leads to accurate and tight expressions, which can be efficiently
used for the performance evaluation and the design of
multidimensional lattice constellations, both in Additive White
Gaussian Noise (AWGN) and fading channels.
\end{abstract}

\begin{IEEEkeywords}
Multidimensional lattice constellations, signal space diversity
(SSD), fading channels, sphere bounds, symbol error probability
(SEP).
\end{IEEEkeywords}

\newpage

\section{Introduction}\label{Intro}

The employment of Signal Space Diversity (SSD)-a method which has
been introduced in \cite{Boutros1} to compensate for the
degradation caused by fading channels-to multidimensional lattice
constellations, has attracted the interest of both academia and
industry. By performing component interleaving, new
multidimensional signal sets can be designed, which can achieve
diversity gain without any additional requirements for power,
bandwidth or multiple antennas, but only through rotation of the
multidimensional constellation. Such signal sets that have the
potential to achieve full diversity, have been presented in the
pioneer works \cite{Boutros1,Boutros2,Giraud,Viterbo2,Viterbo3}
and are carved from rotated multidimensional lattices, which meet
the criterion of the maximization of the minimum product distance.
Multidimensional constellations are also used in Multiple
Input-Multiple Output (MIMO) systems \cite{Lee, Fatma},
cooperative communication systems \cite{motahari} and various
coded schemes \cite{Yaboli,LeeLee, Tran2}, while SSD has been
included in the Second Generation Digital Terrestrial Television
Broadcasting System (DVB-T2) standard \cite{DVB-T2}.

\subsection{Motivation}\label{motivation}

Although the evaluation of the performance of such rotated
multidimensional signal sets can be an important tool in their
design, the study of the symbol error probability (SEP) is in
general a hard problem, both in Additive White Gaussian Noise
(AWGN) and in fading channels. This is mainly due to the
difficulty in the analytical computation of the Voronoi cells of
multidimensional constellations\cite{Viterbo}, and the fact that
fading acts independently upon each of the coordinates of the
signal, thus making stochastic not just the power but also the
structure of the lattice.

Various methods have been presented in order to evaluate the
performance of such signal sets, based on either approximations
\cite{Belfiore}, union bounds \cite{Taricco}, or bounds on the
maximization of the minimum product distance concerning algebraic
constructions, such as in \cite{Bayer}. Only recently, some exact
expressions for the SEP of two-dimensional constellations have
been presented in \cite{Jihoon} for Ricean fading channels;
however, the extension of such an analysis to multiple dimensions
seems to be complicated.

The sphere lower bound (SLB), which dates back to Shannon's work
\cite{Shannon}, has been proposed as an efficient tool for
evaluating the performance of multidimensional constellations. By
approximating the decision regions of infinite lattice
constellations - that is multidimensional constellations with
infinite number of points - with a sphere of the same volume, a
tight lower bound on their error performance can be obtained. This
bound in the presence of AWGN has been investigated in
\cite{Viterbo,Tarokh}, while in a similar manner, a sphere upper
bound (SUB) based on the packing radius of the lattice, has been
presented in \cite{Viterbo}. Although both of these sphere bounds
have been investigated in AWGN, their performance in the presence
of fading has not been thoroughly explored so far. In
\cite{Vialle}, the performance of SLB in Rayleigh channels was
approximated via a geometrical approach, while in \cite{Fabregas1}
it was evaluated for Nakagami-\emph{m} block fading channels
through numerical methods. However, it was clearly demonstrated
that, although it is a lower bound for infinite lattice
constellations, it is not generally a lower bound for finite
lattice constellations. Regarding the SUB, to the best of the
authors' knowledge, its performance in the presence of fading has
not been previously investigated. Moreover, while the SUB is an
upper bound also for finite lattice constellations, it is rather
loose.

\subsection{Contribution}\label{Contribution}

In this two-part paper, we provide an analytical framework for the
SEP evaluation of multidimensional finite lattice constellations.
Our analysis can be efficiently applied to multidimensional signal
sets, with arbitrary lattice structure, dimension and rank, taking
into account their common geometrical property: the constellations
form parallelotopes in the multidimensional signal space.

More specifically, in Part I we introduce a combinatorial approach
for the evaluation of the error performance of these signal sets,
based on the parallelotope geometry. Following this approach, we
derive an analytical expression for the exact SEP of
multidimensional finite lattice constellations, which is then
lower- and upper-bounded by two novel closed-form expressions,
called Multiple Sphere Lower Bound (MSLB) and Multiple Sphere
Upper Bound (MSUB) respectively. The MSLB is a new lower bound
which - in contrast with the SLB - takes into account the boundary
effects of a finite constellation. Similarly the MSUB, also taking
into account the boundary effects, is a tighter upper bound in
comparison with the SUB.

These expressions can be easily extended to multidimensional
signal sets distorted by fading. The error performance evaluation
in fading channels is investigated in Part II\cite{Pappi}.
Analytical expressions, which bound the frame error probability in
block fading channels, are derived for the MSLB and the MSUB,
while closed-form expressions are further presented for the SLB
and SUB in block fading. This set of expressions proves to be a
powerful tool for the error performance analysis of
multidimensional constellations, which employ SSD in order to
combat the fading degradation.

The remainder of the Part I is organized as follows. In Section
\ref{Lattices}, the structure and properties of infinite and
finite lattice constellations are described and the geometry of
multidimensional parallelotopes is discussed. Section
\ref{awgnper} presents the system model, while an expression for
the exact performance of finite lattice constellations in the AWGN
channel is derived and the the MSLB and MSUB are introduced. The
simulation results of various constellations and the analytical
bounds are discussed in Section \ref{results}.

\section{Lattices and Parallelotope Geometry\label{Lattices}}
\subsection{Infinite Lattice Constellations}\label{infiniteL}

An infinite lattice constellation lying in an $N$-dimensional
space consists of all the points of a lattice denoted by
$\Lambda.$ A lattice $\Lambda$ is called a \emph{full rank
lattice} when all of its points can be expressed in terms of a set
of $N$ independent vectors $\mathbf{v_i}$, $i=1,\ldots,N$, called
\emph{basis vectors}. In full rank lattices, every lattice point
is given by

\begin{equation}\label{Lambda}
\Lambda=\mathbf{M}\mathbf{z},\,\,\,\,\,\,\,\,\mathbf{z}\in\mathbb{Z}^N,
\end{equation}
where $\mathbf{M}\in\mathbb{R}^{N\times N}$ is the \emph{generator
matrix} and $\mathbf{z}\in\mathbb{Z}^N$ is a vector whose elements
are integers. Each different vector $\mathbf{z}$ corresponds to a
different point on the lattice $\Lambda$.

The columns of the generator matrix $\mathbf{M}$ are the basis
vectors $\mathbf{v_i}$, that is

\begin{equation}\label{generatorM}
\begin{array}{ccc}
\mathbf{M}=[\mathbf{v_1}\,\, \mathbf{v_2}\,\, \ldots\,\,
\mathbf{v_N}],&\mathbf{v_i}=[v_{i1}\,\,v_{i2}\,\,\ldots\,\,v_{iN}]^T,&i=1,2,\ldots,N.
\end{array}
\end{equation}
The parallelotope consisting of the points
\begin{equation}\label{parallelotope}
\theta_1\mathbf{v_1}+\theta_2\mathbf{v_2}+\ldots+\theta_N\mathbf{v_N},\,\,\theta_i=\{0,1\},
\end{equation}
is called the \emph{fundamental parallelotope} of the lattice
which tessellates Euclidean space. The volume of the fundamental
parallelotope is
$\mathrm{vol}(\Lambda)=|\mathrm{det}(\mathbf{M})|$.

We call the \emph{Voronoi cell}, $\mathcal{V}_\Lambda$, of a
lattice point $s_i$, the region $\mathbf{R}$ for which holds that
\cite{Viterbo}
\begin{equation}\label{voronoi}
\mathcal{V}_\Lambda=\{x\in \mathbf{R}:
\|x-s_i\|\leq\|x-s_j\|\text{ for all }i\neq j\}.
\end{equation}
In an infinite lattice constellation, the Voronoi cell also
tessellates Euclidean space, and thus, it is also
$\mathrm{vol}(\mathcal{V}_{\Lambda})=|\mathrm{det}(\mathbf{M})|$.
Next, this volume is normalized to be
$|\mathrm{det}(\mathbf{M})|=1$, as in \cite{Fabregas1,Conway}.

\subsection{Finite Lattice Constellations}\label{finiteL}

We consider finite lattice constellations, denoted by $\Lambda'$,
which are carved from an infinite $N$-dimensional lattice
constellation $\Lambda$ and they can be defined with respect to
the generator matrix $\mathbf{M}$ of the lattice $\Lambda$, from
which $\Lambda'$ is carved. Each of these constellations have $K$
points along the direction of each basis vector, thus having a
parallelotope as a shaping region, formed by the vector basis of
the infinite lattice constellation $\Lambda$. These constellations
will be denoted by a $K$-Pulse Amplitude Modulation ($K$-PAM),
since we assume that they are constructed by a PAM signal set
along each basis vector direction. Note that this is not the usual
consideration of multidimensional signal sets produced by a PAM
along every coordinate, since the basis vectors are not orthogonal
in the general case. A finite lattice constellation is defined as

\begin{equation}\label{LambdaF}
\begin{array}{ccc}
\Lambda'=\mathbf{M}\mathbf{u},&\mathbf{u}=[u_1\,\,u_2\,\,\ldots\,\,u_N]^T,&u_i\in\{0,1,\ldots,K-1\}.
\end{array}
\end{equation}
When a finite lattice is considered as a signal set, it is usually
in the form

\begin{equation}\label{Lambdaconst}
\Lambda'=\mathbf{M}\mathbf{u}+\mathbf{x_0},
\end{equation}
where $\mathbf{x_0}$ is an offset vector, used to minimize the
mean energy of the constellation. Since this does not affect our
analysis, it is omitted hereafter.

\subsection{Parallelotope Geometry}\label{sublattices}

The finite lattice constellations under consideration form
$N$-dimensional parallelotopes in the $N$-dimensional signal
space, formed by the same basis vectors $\mathbf{v_i}$ as the
lattices they are carved from. Next, some basic definitions are
given, which demonstrate important geometrical characteristics of
the $N$-dimensional parallelotopes.

\begin{definition}\label{defsets}

We define all the \emph{basis vector subsets}, containing $k$ out
of $N$ basis vectors $\mathbf{v_i}$, $k\leq N$, as

\begin{equation}\label{vset}
\mathcal{S}_{k,p}\subseteq\mathcal{S}_{N}=
\{\mathbf{v_1},\mathbf{v_2},\ldots,\mathbf{v_N}\},
\end{equation}
where $p=1,2,\ldots,\binom{N}{k}$ is an index enumerating all
different subsets with $k$ out of $N$ basis vectors. When $k=0$ or
$k=N$, it is $p=1$ and therefore it is omitted. When $k=0$,
$\mathcal{S}_0$ is the empty set.
\end{definition}

\begin{definition}\label{deffacets} In a parallelotope, the vertices,
edges, faces etc., are called \emph{facets}.
Each facet which lies in a $k$-dimensional subspace, the span of a
$\mathcal{S}_{k,p}$ basis vector subset, is denoted by
$\mathcal{F}_{k,p}$. When $k=N$, $\mathcal{F}_N$ denotes the inner
space of the parallelotope and the index $p=1$ is omitted. When
$k=0$, each zero-dimensional facet $\mathcal{F}_0$ denotes one
vertex, and the index $p=1$ is also omitted. Edges are
one-dimensional facets, faces are two dimensional facets etc.
\end{definition}

According to Definition \ref{deffacets}, each facet includes all
points $\mathbf{x}$ in the $N$-dimensional space, which satisfy
\begin{equation}\label{eqforfacets}
\mathcal{F}_{k,p}=\{\mathbf{x}=\mathbf{Mr},\,\,\,
\mathbb{R}^N\ni\mathbf{r}=[r_1,r_2,\ldots,r_N]^T
:\left\{\begin{array}{ll}
0<r_i<K-1,&i:\mathbf{v_i}\in\mathcal{S}_{k,p}\\
r_i=\{0,K-1\},&i:\mathbf{v_i}\not\in\mathcal{S}_{k,p}\\
\end{array}\right.\},
\end{equation}
where $\mathbf{M}$ is the generator matrix with the basis vectors
$\mathbf{v_i}$ and $\mathbf{r}$ is an $N$-dimensional real vector.
On a specific $\mathcal{F}_{k,p}$ facet, the values of the $r_i$'s
for which $i:\mathbf{v_i}\not\in\mathcal{S}_{k,p}$ remain
constant.

\begin{definition}\label{defeqfacets}
We call \emph{equivalent facets} those facets lying in
$k$-dimensional subspaces defined by the same basis vector subset
$\mathcal{S}_{k,p}$.
\end{definition}

According to (\ref{eqforfacets}), the number of vectors
$\mathbf{v_i}\not\in\mathcal{S}_{k,p}$ is $N-k$ and there are two
possible values for the corresponding $r_i$ elements of the vector
$\mathbf{r}$. Consequently, there are $2^{N-k}$ different
combinations and thus $2^{N-k}$ equivalent $\mathcal{F}_{k,p}$
facets on the $N$-dimensional parallelotope, for specific $k$ and
$p$. Furthermore, since there are $\binom{N}{k}$ different values
for the index $p=1,\ldots,\binom{N}{k}$, the total number of
$k$-dimensional facets is

\begin{equation}\label{facets}
n_k=2^{N-k}\binom{N}{k},\,\,\,\,\,0\leq k\leq N.
\end{equation}
For example, a three-dimensional parallelotope, called
parallelepiped, consists of twelve edges, which in groups of four
are equivalent, that is four $\mathcal{F}_{1,p}$ facets for each
$p=1,2,3$. Accordingly, there are six faces, which in groups of
two are equivalent, that is two $\mathcal{F}_{2,p}$ facets for
each  $p=1,2,3$.

Let $r_i^{\mathcal{F}_{k,p}}$ be the elements $r_i$ of the vector
$\mathbf{r}$ in (\ref{eqforfacets}) for a specific
$\mathcal{F}_{k,p}$. Then,

\begin{definition}\label{adjacent}
For a $\mathcal{F}_{k,p}$ facet, all those facets
$\mathcal{F}_{q,t}$, for which
$\mathcal{S}_{k,p}\subset\mathcal{S}_{q,t}$ and
$r_i^{\mathcal{F}_{q,t}}=r_i^{\mathcal{F}_{k,p}}$ $\forall
i:\mathbf{v_i}\not\in\mathcal{S}_{q,t}$, will be called
\emph{adjacent} facets to $\mathcal{F}_{k,p}$.
\end{definition}

In other words, in an adjacent facet $\mathcal{F}_{q,t}$, when
$r_i=0$ or $r_i=K-1$, the corresponding $r_i$ in
$\mathcal{F}_{k,p}$ is of the same value. Since there are $N-q$
vectors $\mathbf{v_i}\not\in\mathcal{S}_{q,t}$ and $N-k$ vectors
$\mathbf{v_i}\not\in\mathcal{S}_{k,p}$, for specific $q$, $k<q\leq
N$, the number of adjacent $q$-dimensional facets is
$\binom{N-k}{N-q}=\binom{N-k}{q-k}$, which is also the number of
different $\mathcal{S}_{q,t}$ sets for which
$\mathcal{S}_{k,p}\subset\mathcal{S}_{q,t}$. Consequently, the
number of all adjacent facets of any dimension is
$\sum\limits_{q=k+1}^N\binom{N-k}{q-k}$. Note that, according to
the definition above, all facets $\mathcal{F}_{q,t}$ adjacent to a
facet $\mathcal{F}_{k,p}$ are of greater dimension than
$\mathcal{F}_{k,p}$.

\subsection{Lattice Constellation Points}\label{points}

The finite constellations considered in this paper construct
lattice parallelotopes. Each point in this lattice lies on a
specific $\mathcal{F}_{k,p}$ facet or in the inner space
$\mathcal{F}_N$ of the parallelotope.

\begin{definition}\label{pointclass}
A point of an $N$-dimensional lattice parallelotope is considered
an $\mathcal{F}_{k,p}$ - point when it lies on an
$\mathcal{F}_{k,p}$ facet, that is when
\begin{equation}\label{eqforpoints}
\mathbf{x}=\mathbf{Mu},\,\,\,
\mathbb{Z}^N\ni\mathbf{u}=[u_1,u_2,\ldots,u_N]^T
:\left\{\begin{array}{ll}
0<u_i<K-1,&i:\mathbf{v_i}\in\mathcal{S}_{k,p}\\
u_i=\{0,K-1\},&i:\mathbf{v_i}\not\in\mathcal{S}_{k,p}\\
\end{array}\right..
\end{equation}
\end{definition}

From Definition \ref{pointclass}, it can be easily deduced that
the number of points on a $\mathcal{F}_{k,p}$ facet is

\begin{equation}\label{numofpoints}
(K-2)^k,\,\,\,\,\,0\leq k\leq N,
\end{equation}
since there are $(K-2)$ different possible values for every $u_i$
with $i:\mathbf{v_i}\in\mathcal{S}_{k,p}$, and there are $k$ such
values of $i$.

\begin{definition}\label{innerpoints}
All points for which $u_i\neq 0$ and $u_i\neq K-1$ $\forall i$ in
(\ref{eqforpoints}), are called \emph{inner points} of the
constellation. All the remaining points are called \emph{outer
points}.
\end{definition}

\begin{definition}\label{defeqpoints}
Points on equivalent $\mathcal{F}_{k,p}$ facets are called
\emph{equivalent points}, when for each
$i:\mathbf{v_i}\in\mathcal{S}_{k,p}$, the corresponding $u_i$
value of the vector $\mathbf{u}$ in (\ref{eqforpoints}), is equal
between all points.

\end{definition}

For example, in Fig. \ref{Fig:2DL},
$\mathcal{S}_{1,1}=\{\mathbf{v_1}\}$ and
$\mathcal{S}_{1,2}=\{\mathbf{v_2}\}$. We can decern two
$\mathcal{F}_{1,1}$ edges parallel to $\mathbf{v_1}$, two
$\mathcal{F}_{1,2}$ edges parallel to $\mathbf{v_2}$ and four
vertices. There are four inner points lying in $\mathcal{F}_2$,
two points on each equivalent $\mathcal{F}_{1,1}$ and
$\mathcal{F}_{1,2}$ and four vertices in total. Points \emph{A}
and \emph{B} are equivalent points according to Definition
\ref{defeqpoints}, since it is $u_2=2$ for both and they lie on
equivalent $\mathcal{F}_{1,2}$ facets.

It must be noted here that the outer points of a finite lattice
lying on a $\mathcal{F}_{k,p}$ facet, can also be considered as
being points of a sublattice, defined by the basis vector subset
$\mathcal{S}_{k,p}$. Accordingly, we define the following Voronoi
cells:
\begin{definition}\label{voronoiskp}
The $k$-dimensional Voronoi cell of a sublattice, defined by a
vector subset $\mathcal{S}_{k,p}$, is denoted by
$\mathcal{V}_{\mathcal{S}_{k,p}}$. For $k=N$,
$\mathcal{V}_{\mathcal{S}_{N}}\equiv\mathcal{V}_\Lambda$.
\end{definition}

\section{Performance Evaluation in Additive White Gaussian Noise (AWGN)}\label{awgnper}

In practical communication schemes using lattice constellations,
the transmitted signal point belongs to a finite lattice
constellation, as described in Section \ref{finiteL}. Next, the
communication system model is presented and the geometry of these
signal sets is examined.

\subsection{System Model}\label{comsys}
We consider communication in an AWGN channel where the received
signal vector is
\begin{equation}\label{rsymbol}
\mathbf{y}=\mathbf{x}+\mathbf{w},
\end{equation}
with $\mathbf{y}\in\mathbb{R}^N$ being the received
$N$-dimensional real signal vector, $\mathbf{x}\in\mathbb{R}^N$ is
the transmitted $N$-dimensional real signal vector and
$\mathbf{w}\in\mathbb{R}^N$ is the $N$-dimensional noise vector
whose samples are zero-mean Gaussian independent random variables
with variance $\sigma^2$. We define the signal-to-noise ratio
(SNR) as $\rho=\frac{1}{\sigma^2}$. The transmitted signal vector
$\mathbf{x}$ is a signal point in an infinite lattice
constellation $\Lambda$ or a finite lattice constellation
$\Lambda'$.

The conditional probability of receiving $\mathbf{y}$ while
transmitting $\mathbf{x}$ is

\begin{equation}\label{condprob}
p(\mathbf{y}|\mathbf{x})=(2\pi\sigma^2)^{-\frac{N}{2}}\exp\left(-\frac{1}{2\sigma^2}\|\mathbf{y}-\mathbf{x}\|^2\right),
\end{equation}
and Maximum Likelihood (ML) detection is employed at the receiver.

\subsection{Analytical Expressions for the Symbol Error Probability (SEP)}\label{exact}

In an infinite lattice constellation $\Lambda$, all signal points
are considered equiprobable and they have exactly the same error
performance since their Voronoi cells are equal. Thus the SEP of
an infinite lattice constellation is \cite{Fabregas1}

\begin{equation}\label{pinf}
P_\infty(\rho)=1-\int_{\mathcal{V}_{\Lambda}}
p(\mathbf{z})\mathrm{d}\mathbf{z}.
\end{equation}
The evaluation of $P_\infty(\rho)$ is often a tedious task due to
the difficulty of the computation of $\mathcal{V}_{\Lambda}$
\cite{Viterbo}. However, it can be approximated or bounded by
closed-form expressions as in \cite{Fabregas1}. To the best of the
authors' knowledge, a similar expression to (\ref{pinf}) for
finite lattice constellations does not exist, since the decision
regions of the outer points of these constellations do not lie in
regions equal to $\mathcal{V}_{\Lambda}$, a fact often referred to
as boundary effect \cite{Fabregas1}.

The SEP of a finite lattice constellation is given by

\begin{equation}\label{ps}
P_{K-PAM}(\rho)=1-\frac{\sum\limits_{i=1}^{K^N}\left[\int_{\mathbf{R_i}}
p(\mathbf{z})\mathrm{d}\mathbf{z}\right]}{K^N},
\end{equation}
where $\mathbf{R_i}$, $i=1,\ldots,K^N$, are the regions of correct
decision of the constellation signal points and $p(\mathbf{z})$ is
the $N$-dimensional probability density function (pdf) of AWGN as
defined in (\ref{condprob}). The decision regions $\mathbf{R_i}$
of the inner points of the constellation are equal to the Voronoi
cell $\mathcal{V}_{\Lambda}$, while those of the outer points are
generally unknown. In order to circumvent this, we employ a
geometrical technique, so as to express the sum of integrals in
(\ref{ps}) in terms of integrals on integration regions that are
Voronoi cells of the sublattices defined by the vector subsets
$\mathcal{S}_{k,p}$.

To derive an analytical expression for (\ref{ps}), it is necessary
first to proceed to a partitioning of the $N$-dimensional space in
the following regions:
\begin{itemize}

\item
The inner space of the parallelotope,
$\mathcal{D}_{\mathcal{F}_N}\equiv\mathcal{F}_N$, as defined in
(\ref{eqforfacets}).
\item
All the disjoint regions, denoted by
$\mathcal{D}_{\mathcal{F}_{k,p}}$, which are the projections of a
facet $\mathcal{F}_{k,p}$ to the directions vertical to this
facet. These regions are defined as
\begin{equation}\label{eqforpartition}
\begin{array}{cc}
\mathcal{D}_{\mathcal{F}_{k,p}}=\{\mathbf{y}=\mathbf{x}+\overline{\mathbf{V}}\mathbf{a},\,\,\,
\mathbf{a}\in\mathbb{R}^{N-k}_{+},\,\,\,
\mathbf{x}\in\mathcal{F}_{k,p}\},&0\leq k<N,
\end{array}
\end{equation}
where $\mathbf{x}$ are the points on a facet $\mathcal{F}_{k,p}$
as defined in (\ref{eqforfacets}), $\mathbf{a}$ is a vector of
dimension $(N-k)\times 1$ with positive real elements and
$\overline{\mathbf{V}}$ is an $N\times (N-k)$ matrix. If
$k<(N-1)$, its columns are the vertical vectors on all
$\mathcal{F}_{N-1,t}$ facets, which are adjacent to
$\mathcal{F}_{k,p}$ according to Definition \ref{adjacent}, with
outward direction compared to the parallelotope. The number of
$\mathcal{F}_{N-1,t}$ adjacent facets is $\binom{N-k}{q-k}$ for
$q=N-1$, that is $\binom{N-k}{N-1-k}=\binom{N-k}{1}=N-k$. If
$k=N-1$, then $\overline{\mathbf{V}}$ is an $N\times 1$ vector,
vertical to the $\mathcal{F}_{N-1,p}$ facet itself, with outward
direction compared to the parallelotope.

\end{itemize}

For example, in Fig. \ref{Fig:2DL}, the four partitions
$\mathcal{D}_{\mathcal{F}_0}$ which are highlighted extend to
infinity. Each corresponding matrix $\overline{\mathbf{V}}$ is a
$2\times 2$ matrix containing the vectors
$\overline{\mathbf{v_1}}$ and $\overline{\mathbf{v_2}}$, or their
negatives, i.e. with opposite direction. Thus, an integral on the
sum of these partitions equals an integral on the projection of
one of the equivalent $\mathcal{F}_0$ facets to all directions
vertical to it.

\begin{remark}\label{decisionparts}
The outer points of a finite lattice constellation lie in decision
regions which extend to the infinity. Taking into account that
these regions are constructed by employing the ML criterion, for a
signal point lying on a $\mathcal{F}_{k,p}$ facet, the decision
region can be divided into partial regions. Each of them belongs
ether to the inner space $\mathcal{D}_{\mathcal{F}_{N}}$, the
region $\mathcal{D}_{\mathcal{F}_{k,p}}$ or the regions
$\mathcal{D}_{\mathcal{F}_{q,t}}$, where $\mathcal{F}_{q,t}$ is a
facet adjacent to $\mathcal{F}_{k,p}$, $q<N$. Consequently, for a
point lying on some $\mathcal{F}_{k,p}$ with decision region
$\mathbf{R}$ it holds that
\begin{equation}\label{decregion}
\int_{\mathbf{R}}
p(\mathbf{z})\mathrm{d}\mathbf{z}=\sum\limits_{i=k}^N\sum\limits_{j:\mathcal{S}_{k,p}\subseteq\mathcal{S}_{i,j}}\int_{D\in\mathcal{D}_{\mathcal{F}_{i,j}}}
p(\mathbf{z})\mathrm{d}\mathbf{z},
\end{equation}
where $D\in\mathcal{D}_{\mathcal{F}_{i,j}}$ is the part of the
decision region in the partition
$\mathcal{D}_{\mathcal{F}_{i,j}}$. The summation in
(\ref{decregion}) ensures that the facets considered are the facet
$\mathcal{F}_{k,p}$ on which the point lies and all of its
adjacent facets.
\end{remark}

For example, in Fig. \ref{Fig:2DL}, point A lies on a
$\mathcal{F}_{1,2}$ facet. According to Definition \ref{adjacent},
the only adjacent facet to $\mathcal{F}_{1,2}$, is the inner space
of the constellation $\mathcal{F}_2$. Thus, according to Remark
\ref{decisionparts}, the decision region of A is divided in two
parts, $D_{1A}$ and $D_{2A}$, with
$D_{1A}\in\mathcal{D}_{\mathcal{F}_2}$ and
$D_{2A}\in\mathcal{D}_{\mathcal{F}_{1,2}}$.

\begin{definition}\label{defJk}
An integral $J_{k,p}$ is defined as

\begin{equation}\label{jk}
J_{k,p}=\int_{\mathcal{V}_{\mathcal{S}_{k,p}}}
p(\mathbf{z_k})\mathrm{d}\mathbf{z_k},\,\,\,\,\,\,0<k<n,
\end{equation}
where $p(\mathbf{z_k})$ is a $k$-dimensional zero mean Gaussian
distribution, $\mathcal{V}_{\mathcal{S}_{k,p}}$ is the Voronoi
cell of the k-dimensional sublattice defined by the basis vector
subset $\mathcal{S}_{k,p}$. Note that when $k=0$, then
$J_0\triangleq1$.
\end{definition}

Let $\mathcal{L}_{k,p}$ be the number of equivalent
$\mathcal{F}_{k,p}$ facets for specific $k$ and $p$. If all the
integrals on the decision regions of $\mathcal{L}_{k,p}$
equivalent $\mathcal{F}_{k,p}$-points are added, the resulting sum
$\mathrm{S}$ is

\begin{equation}\label{sumeqpoints}
\mathrm{S}=\sum\limits_{\mathcal{L}_{k,p}}\int_{\mathbf{R}}
p(\mathbf{z})\mathrm{d}\mathbf{z}=\sum\limits_{\mathcal{L}_{k,p}}\sum\limits_{i=k}^N\sum\limits_{j:\mathcal{S}_{k,p}\subseteq\mathcal{S}_{i,j}}\int_{D\in\mathcal{D}_{\mathcal{F}_{i,j}}}
p(\mathbf{z})\mathrm{d}\mathbf{z},
\end{equation}
and since the decision regions
$D\in\mathcal{D}_{\mathcal{F}_{i,j}}$ are disjoint for different
points, (\ref{sumeqpoints}) yields

\begin{equation}\label{sumeqpoints2}
\mathrm{S}=\sum\limits_{i=k}^N\sum\limits_{j:\mathcal{S}_{k,p}\subseteq\mathcal{S}_{i,j}}\int_{\sum\limits_{\mathcal{L}_{i,j}}
D\in\mathcal{D}_{\mathcal{F}_{i,j}}}
p(\mathbf{z})\mathrm{d}\mathbf{z},
\end{equation}
where $\sum\limits_{\mathcal{L}_{i,j}}
D\in\mathcal{D}_{\mathcal{F}_{i,j}}$ is the sum of partial
decision regions of $\mathcal{L}_{k,p}$ equivalent points, on all
$\mathcal{L}_{i,j}$ equivalent $\mathcal{F}_{i,j}$ facets. This
sum of partial decision regions is a region which is the
projection of a $\mathcal{V}_{\mathcal{S}_{i,j}}$ Voronoi cell to
all directions vertical to the span of the $\mathcal{S}_{i,j}$ set
of vectors. To reduce the integrals' dimension, a change of
variable and a Jacobian transformation is used, as in
\cite{Tarokh}, and thus (\ref{sumeqpoints2}) yields

\begin{equation}\label{sumeqpoints3}
\mathrm{S}=\sum\limits_{i=k}^N\sum\limits_{j:\mathcal{S}_{k,p}\subseteq\mathcal{S}_{i,j}}\int_{\mathcal{V}_{\mathcal{S}_{k,p}}}
p(\mathbf{z_k})\mathrm{d}\mathbf{z_k}=\sum\limits_{i=k}^N\sum\limits_{j:\mathcal{S}_{k,p}\subseteq\mathcal{S}_{i,j}}J_{i,j}.
\end{equation}

For example, in Fig. \ref{Fig:2DL}, points A and B are equivalent
points on $\mathcal{F}_{1,2}$ facets. Their decision regions are
divided in the partial regions $D_{1A}$, $D_{2A}$, $D_{1B}$ and
$D_{2B}$. The integrals on these partial regions are combined into
two new integrals denoted with $J_2$ and $J_{1,2}$.

Employing the above method, we can now present the following
theorem:

\begin{theorem}\label{propsum}
The SEP of a multidimensional finite lattice constellation is
given by

\begin{equation}\label{psfinal}
P_{K-PAM}(\rho)=1-\frac{\sum\limits_{k=0}^N(K-1)^k\sum\limits_{p=1}^{\binom{N}{k}}J_{k,p}}{K^N}.
\end{equation}
\end{theorem}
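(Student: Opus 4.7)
The plan is to start from the raw expression (\ref{ps}) for $P_{K-PAM}(\rho)$, partition the $K^N$ constellation points according to the facet $\mathcal{F}_{k,p}$ on which they lie, and then invoke the combinatorial identity (\ref{sumeqpoints3}) to rewrite each block of integrals as a sum of $J_{i,j}$'s. The final step is a binomial collapse of the resulting double sum.

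First I would classify the signal points. By Definition~\ref{pointclass}, every constellation point lies on exactly one facet $\mathcal{F}_{k,p}$ (with $\mathcal{F}_N$ regarded as the inner space). By (\ref{numofpoints}) there are $(K-2)^k$ points on each facet, and by (\ref{facets}) there are $2^{N-k}$ equivalent facets for each $(k,p)$. Moreover, by Definition~\ref{defeqpoints}, the points on these $2^{N-k}$ equivalent facets split into $(K-2)^k$ equivalence classes, each class consisting of $\mathcal{L}_{k,p}=2^{N-k}$ equivalent points sharing the same tuple of interior indices $u_i$. This accounts for all $K^N$ points, since $\sum_{k=0}^{N}\binom{N}{k}2^{N-k}(K-2)^k=K^N$ by the binomial theorem.

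Next I would apply (\ref{sumeqpoints3}) to each equivalence class. For one class of $2^{N-k}$ equivalent $\mathcal{F}_{k,p}$-points, the combined integral over their decision regions equals $\sum_{i=k}^{N}\sum_{j:\mathcal{S}_{k,p}\subseteq\mathcal{S}_{i,j}} J_{i,j}$. Summing over the $(K-2)^k$ classes on each $(k,p)$ type, then over $p=1,\ldots,\binom{N}{k}$, and finally over $k=0,\ldots,N$, the numerator in (\ref{ps}) becomes
\begin{equation}\label{planstep}
\sum_{i=1}^{K^N}\int_{\mathbf{R_i}} p(\mathbf{z})\,\mathrm{d}\mathbf{z}=\sum_{k=0}^{N}(K-2)^k\sum_{p=1}^{\binom{N}{k}}\sum_{i=k}^{N}\sum_{j:\mathcal{S}_{k,p}\subseteq\mathcal{S}_{i,j}} J_{i,j}.
\end{equation}

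The last step is to swap the order of summation and collect the coefficient of each $J_{i,j}$. For a fixed pair $(i,j)$, the basis vector subsets $\mathcal{S}_{k,p}$ with $\mathcal{S}_{k,p}\subseteq\mathcal{S}_{i,j}$ and $|\mathcal{S}_{k,p}|=k$ are exactly the $k$-element subsets of $\mathcal{S}_{i,j}$, of which there are $\binom{i}{k}$. Hence $J_{i,j}$ appears in (\ref{planstep}) with coefficient $\sum_{k=0}^{i}\binom{i}{k}(K-2)^k=(K-1)^i$ by the binomial theorem, which yields $\sum_{i=0}^{N}(K-1)^i\sum_{j=1}^{\binom{N}{i}} J_{i,j}$ after renaming indices, and substitution into (\ref{ps}) gives (\ref{psfinal}).

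The main obstacle is not the final binomial identity, which is mechanical, but ensuring the bookkeeping in (\ref{planstep}) is correct: one must verify that the partition of the ambient space into the regions $\mathcal{D}_{\mathcal{F}_{k,p}}$ of (\ref{eqforpartition}) together with Remark~\ref{decisionparts} really decomposes every ML decision region $\mathbf{R_i}$ without overlap or omission, so that the rearrangement leading to (\ref{sumeqpoints2}) and the Jacobian reduction to (\ref{sumeqpoints3}) apply uniformly to \emph{every} equivalence class. Once this geometric partitioning is justified, the theorem follows by pure combinatorics.
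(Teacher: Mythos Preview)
Your proposal is correct and follows essentially the same route as the paper's proof: classify points by facet, invoke (\ref{sumeqpoints3}) on each equivalence class to obtain the analogue of (\ref{sums}), then exchange the order of summation and use $\sum_{k=0}^{i}\binom{i}{k}(K-2)^k=(K-1)^i$ to collapse to (\ref{sumfinal}). Your treatment is slightly more streamlined in that you handle $k=0$ uniformly rather than separating it out as the paper does in (\ref{suma})--(\ref{sumb}), but the logical content is identical.
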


\begin{proof}
Due to Definition \ref{adjacent}, Remark \ref{decisionparts} and
(\ref{sumeqpoints3}), the sum of partial regions of equivalent
points, lying on all equivalent $\mathcal{F}_{k,p}$'s, for
specific $k$ and $p$, yields the sum of integrals,

\begin{equation}\label{suma}
\mathrm{S}=\left\{\begin{array}{lc} \sum\limits_{i=k}^N
\sum\limits_{j:\mathcal{S}_{k,p}\subseteq
\mathcal{S}_{i,j}}J_{i,j},&k\neq0,\\
\sum\limits_{i=0}^N\sum\limits_{j=1}^{\binom{N}{i}}J_{i,j},&k=0.
\end{array}\right.
\end{equation}

From (\ref{numofpoints}) and (\ref{suma}), the sum of integrals of
the regions of all points, lying on $\mathcal{F}_{k,p}$ facets for
specific $k$ and $p$, is

\begin{equation}\label{sumb}
\begin{array}{lc}
(K-2)^k\sum\limits_{i=k}^N\sum\limits_{j:\mathcal{S}_{k,p}\subseteq\mathcal{S}_{i,j}}J_{i,j},&0<k<N,\\
\sum\limits_{i=0}^N\sum\limits_{j=1}^{\binom{N}{i}}J_{i,j},&k=0.
\end{array}
\end{equation}

Adding the above sums for all values of $p$ and $k$ we have

\begin{equation}\label{sums}
\sum_{k=1}^N\sum_{p=1}^{\binom{N}{k}}(K-2)^k\sum_{i=k}^N\sum_{j:\mathcal{S}_{k,p}\subseteq\mathcal{S}_{i,j}}J_{i,j}+\sum_{i=0}^N\sum_{j=1}^{\binom{N}{i}}J_{i,j}.
\end{equation}
By changing the order of summing for indexes $i$ and $k$ in the
first term of (\ref{sums}), and combining the sums for the
enumeration indexes $p$ and $j$, due to the possible subsets and
the times that each $J_{i,j}$ appears, (\ref{sums}) yields

\begin{equation}\label{sumsb}
\sum_{i=1}^N\sum_{k=1}^i\binom{i}{k}(K-2)^k\sum_{j=1}^{\binom{N}{i}}J_{i,j}+\sum_{i=0}^N\sum_{j=1}^{\binom{N}{i}}J_{i,j},
\end{equation}
which can be written as

\begin{equation}\label{sumsc}
\sum_{i=0}^N\left(\sum_{k=0}^i\binom{i}{k}(K-2)^k-1\right)\sum_{j=1}^{\binom{N}{i}}J_{i,j}+\sum_{i=0}^N\sum_{j=1}^{\binom{N}{i}}J_{i,j},
\end{equation}
or equivalently

\begin{equation}\label{sums2}
\sum_{i=0}^N\sum_{k=0}^i\binom{i}{k}(K-2)^k\sum_{j=1}^{\binom{N}{i}}J_{i,j}.
\end{equation}
Due to the binomial theorem, (\ref{sums2}) reduces to

\begin{equation}\label{sumfinal}
\sum\limits_{i=0}^N(K-1)^i\sum\limits_{j=1}^{\binom{N}{i}}J_{i,j}.
\end{equation}

Using (\ref{sumfinal}), (\ref{ps}) yields (\ref{psfinal}) and this
concludes the proof.
\end{proof}

The expression in (\ref{psfinal}) cannot be directly evaluated,
except for special cases, since the analytical evaluation of
$\mathcal{V}_{\mathcal{S}_{k,p}}$ is generally a hard problem
\cite{Viterbo}. However, for the important case of SQAM
constellations, since the Voronoi cells are square,
(\ref{psfinal}) reduces to the well known closed-form SEP for the
SQAM \cite{Proakis}. In the following we propose closed-form lower
and upper bounds to $P_{K-PAM}\left(\rho\right)$, called Multiple
Sphere Lower Bound (MSLB) and Multiple Sphere Upper Bound (MSUB),
respectively. In these bounds, the integrals on the decision
regions of the signal points are substituted by integrals on
spheres of various dimensions.

\subsection{Multiple Sphere Lower Bound (MSLB)}\label{lowerbounds}

For the readers' convenience, we first present the Sphere Lower
Bound (SLB) for infinite lattice constellations, presented also in
\cite{Fabregas1}.

The error probability, $P_{\infty}\left(\rho\right)$, of an
infinite lattice constellation $\Lambda$ is lower-bounded by

\begin{equation}\label{slb}
P_{slb}(\rho)=1-\int\limits_{\mathcal{B}_N}p(\mathbf{z})\mathrm{d}\mathbf{z},
\end{equation}
where $\mathcal{B}_N$ is an $N$-dimensional sphere of the same
volume as the Voronoi cell $\mathcal{V}_{\mathcal{S}_N}$. Due to
the normalization $|\mathrm{det}(\mathbf{M})|=1$, the sphere
$\mathcal{B}_N$ is of unitary volume. It holds that \cite{Conway}
\begin{equation}\label{multivol}
\mathrm{vol}(\mathcal{B}_N)=\frac{\pi^{\frac{N}{2}}R_N^N}{\Gamma\left(\frac{N}{2}+1\right)}=1,
\end{equation}
where $R_N$ is the radius of the $N$-dimensional sphere, and
$\Gamma(\cdot)$ is the Gamma Function defined by \cite[Eq.
(8.310)]{Gradshteyn}. The radius $R_N$ is given by

\begin{equation}\label{radiusinf}
R_N^2=\frac{1}{\pi}\Gamma\left(\frac{N}{2}+1\right)^{\frac{2}{N}}.
\end{equation}

Subsequently, by substituting (\ref{radiusinf}) in (\ref{slb}) and
taking into account (\ref{condprob}), we get

\begin{equation}
P_{slb}(\rho)=1-\int\limits_{\mathcal{B}_N}p(\mathbf{z})\mathrm{d}\mathbf{z}=1-\left[1-\frac{\Gamma\left(\frac{N}{2},\frac{R_N^2}{2}\rho\right)}{\Gamma\left(\frac{N}{2}\right)}\right]=\frac{\Gamma\left(\frac{N}{2},\frac{R_N^2}{2}\rho\right)}{\Gamma\left(\frac{N}{2}\right)},
\end{equation}
where $\Gamma(a,x)=\int_x^{+\infty}t^{a-1}e^{-t}\mathrm{d}t$ is
the upper incomplete Gamma function defined in \cite[Eq.
(8.350)]{Gradshteyn}.

\begin{definition}\label{Ik}
We define the integrals
\begin{equation}\label{Ik1}
I_k=\int\limits_{\mathcal{B}_k}
p(\mathbf{z_k})\mathrm{d}\mathbf{z_k},\,\,\,\,\,k=1,\ldots,N,
\end{equation}
where $\mathcal{B}_k$ is a $k$-dimensional sphere of radius $R_k$
and $p(\mathbf{z_k})$ is a $k$-dimensional zero mean Gaussian
distribution. When $k=0$, we define $I_0\triangleq J_0=1$.
\end{definition}

The above integrals can be written as \cite{Fabregas1}
\begin{equation}\label{Ik2}
I_k=\left\{\begin{array}{cc}
1,&k=0\\
1-\frac{\Gamma\left(\frac{k}{2},\frac{R_k^2}{2}\rho\right)}{\Gamma\left(\frac{k}{2}\right)},&k=1,2,\ldots,N\\
\end{array}\right.
\end{equation}
Similar to (\ref{radiusinf}), with a slight modification for
finite constellations, the radius $R_k$ in AWGN channels is
defined as follows.

\begin{definition}
The sphere radius $R_k$ is given by
\begin{equation}\label{radius}
R_k^2=\left\{\begin{array}{cc}
\frac{1}{\pi}\Gamma(\frac{k}{2}+1)^\frac{2}{k}W^2,&k=1,2,\ldots,(N-1)\\
\frac{1}{\pi}\Gamma(\frac{k}{2}+1)^\frac{2}{k},&k=N\\
\end{array}\right.
\end{equation}
where $W$ is

\begin{equation}\label{W}
W=\frac{\|\mathbf{v_1}\|+\|\mathbf{v_2}\|+\ldots+\|\mathbf{v_N}\|}{N},
\end{equation}
with $\|\mathbf{v_i}\|$ being the norm of basis vector
$\mathbf{v_i}$. Note that for $\mathbb{Z}^N$ lattices, $W=1$.
\end{definition}

\begin{theorem}\label{MSLB}

The SEP of a multidimensional finite lattice constellation is
lower bounded by

\begin{equation}\label{bound}
P_{mslb}(\rho)=1-\frac{\sum\limits_{k=0}^N{(K-1)^k\binom{N}{k}I_k}}{K^N},
\end{equation}
where $P_{mslb}(\rho)$ is called Multiple Sphere Lower Bound
(MSLB).
\end{theorem}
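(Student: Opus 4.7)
The plan is to deduce the bound directly from the exact SEP expression of Theorem~\ref{propsum}. Comparing (\ref{psfinal}) and (\ref{bound}), and noting that the weights $(K-1)^k/K^N$ are common and nonnegative, it suffices to establish, for every $k\in\{0,1,\ldots,N\}$, the per-dimension inequality
\begin{equation*}
\sum_{p=1}^{\binom{N}{k}} J_{k,p} \;\leq\; \binom{N}{k}\, I_k.
\end{equation*}
The extreme cases are immediate: for $k=0$ we have $J_0 = I_0 = 1$; for $k=N$ there is a single sublattice whose Voronoi cell has unit volume by the normalization $|\det\mathbf{M}|=1$, and $R_N$ in (\ref{radius}) is chosen precisely so that $\mathcal{B}_N$ has unit volume too, whence Shannon's classical rearrangement gives $J_{N,1} \leq I_N$.

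For $1 \leq k \leq N-1$, the first ingredient is the same Shannon-type rearrangement applied in the $k$-dimensional subspace spanned by $\mathcal{S}_{k,p}$: among all $k$-dimensional regions of prescribed volume, the centered ball maximises the integral of the isotropic Gaussian $p(\mathbf{z_k})$. Let $\tilde{J}(V)$ denote this maximal value, i.e.\ the integral of $p(\mathbf{z_k})$ over a centered $k$-ball of $k$-volume $V$. Writing $V_{k,p}$ for $\mathrm{vol}(\mathcal{V}_{\mathcal{S}_{k,p}})$, this gives $J_{k,p} \leq \tilde{J}(V_{k,p})$. A direct calculation from (\ref{radius}) and the $k$-dimensional analogue of (\ref{multivol}) yields $\mathrm{vol}(\mathcal{B}_k) = \pi^{k/2}R_k^k/\Gamma(k/2+1) = W^k$, so $I_k = \tilde{J}(W^k)$.

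The main obstacle is that the individual volumes $V_{k,p}$ may exceed $W^k$, so the per-$p$ inequality $J_{k,p}\leq I_k$ need not hold; only an averaged version does. I plan to bridge this gap with a concavity-and-averaging argument. First, $\tilde{J}$ is concave on $(0,\infty)$: switching to a radial parametrisation, $\tilde{J}'(V) = p(r(V))$ where $r(V)$ is the radius of a $k$-ball of volume $V$, and since $p$ is radially decreasing while $r$ is increasing, $\tilde{J}'$ is decreasing. Jensen's inequality then gives
\begin{equation*}
\sum_{p=1}^{\binom{N}{k}} \tilde{J}(V_{k,p}) \;\leq\; \binom{N}{k}\,\tilde{J}\!\left(\frac{1}{\binom{N}{k}}\sum_{p=1}^{\binom{N}{k}} V_{k,p}\right).
\end{equation*}
To control the average volume, apply Hadamard's inequality to each sublattice, $V_{k,p} \leq \prod_{i:\mathbf{v_i}\in\mathcal{S}_{k,p}} \|\mathbf{v_i}\|$, and sum over $p$ to obtain the elementary symmetric polynomial $e_k(\|\mathbf{v_1}\|,\ldots,\|\mathbf{v_N}\|)$. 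Maclaurin's inequality then gives $e_k \leq \binom{N}{k} W^k$, so the average of the $V_{k,p}$ is at most $W^k$. Monotonicity of $\tilde{J}$ closes the chain and yields the per-$k$ inequality, which, substituted into (\ref{psfinal}), produces (\ref{bound}).
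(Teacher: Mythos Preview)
Your proof is correct and follows essentially the same route as the paper: reduce to the per-$k$ inequality $\sum_p J_{k,p}\le\binom{N}{k}I_k$, apply the sphere rearrangement, combine Hadamard's and Maclaurin's inequalities to bound the average sublattice volume by $W^k$, and finish with Jensen and monotonicity. The only cosmetic difference is that you establish concavity of $\tilde J(V)$ via the clean observation $\tilde J'(V)=p(r(V))$, whereas the paper verifies the equivalent statement that $x\mapsto\Gamma(a,bx^{1/a})$ is convex by direct computation of the second derivative.
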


\begin{proof}
The volume of $\mathcal{V}_{\mathcal{S}_{k,p}}$ in (\ref{jk}), is
the volume of Voronoi cell of a sublattice built by the basis
vector subset $\mathcal{S}_{k,p}$. Since this volume is the same
as the volume of the corresponding fundamental parallelotope of
the sublattice, as a consequence of Hadamard's inequality, it
holds that

\begin{equation}\label{voronoivol}
\mathrm{vol}_k(\mathcal{V}_{\mathcal{S}_{k,p}})\leq\prod_{i:\mathbf{v}_i\in\mathcal{S}_{k,p}}\|\mathbf{v}_i\|,
\end{equation}
where the equality holds only when the vectors of
$\mathcal{S}_{k,p}$ are orthogonal and $\mathrm{vol}_k(\cdot)$ is
the $k$-dimensional volume of a region.

From (\ref{voronoivol}) it is

\begin{equation}\label{voronineq}
\sum\limits_{p=1}^{\binom{N}{k}}\mathrm{vol}_k(\mathcal{V}_{\mathcal{S}_{k,p}})\leq\sum\limits_{p=1}^{\binom{N}{k}}\prod_{i:\mathbf{v}_i\in\mathcal{S}_{k,p}}\|\mathbf{v}_i\|,
\end{equation}
which can be written as
\begin{equation}\label{voronineq2}
\sum\limits_{p=1}^{\binom{N}{k}}\mathrm{vol}_k(\mathcal{V}_{\mathcal{S}_{k,p}})\leq\sum\limits_{\substack{b_1+b_2+\ldots+b_N=k\\b_1,b_2,\ldots,b_N\in\{0,1\}}}\|\mathbf{v}_1\|^{b_1}\|\mathbf{v}_2\|^{b_2}\cdots\|\mathbf{v}_N\|^{b_N}.
\end{equation}

Using Maclaurin's Inequality \cite[p.52]{Hardy}, for
$a_1,a_2,\ldots,a_N\in\mathbb{R}$ and $0<k<N$,

\begin{equation}\label{Maclaurin}
\varrho_N^{\frac{1}{N}}\leq\varrho_k^{\frac{1}{k}}\leq\varrho_1,
\end{equation}
where
\begin{equation}\label{varrho}
\varrho_k=\frac{\sum\limits_{\substack{b_1+b_2+\ldots+b_N=k\\b_1,b_2,\ldots,b_N\in\{0,1\}}}a_1^{b_1}a_2^{b_2}\cdots
a_N^{b_N}}{\binom{N}{k}}.
\end{equation}

If we set $a_i=\|\mathbf{v}_i\|$, $i=1,2,\ldots,N$, then
$\varrho_1=W$ and from (\ref{Maclaurin}) and (\ref{varrho})

\begin{equation}\label{normineq}
\sum\limits_{\substack{b_1+b_2+\ldots+b_N=k\\b_1,b_2,\ldots,b_N\in\{0,1\}}}\|\mathbf{v}_1\|^{b_1}\|\mathbf{v}_2\|^{b_2}\cdots\|\mathbf{v}_N\|^{b_N}\leq\binom{N}{k}W^k.
\end{equation}

From (\ref{voronineq2}) and (\ref{normineq}), for $0<k<N$, we have

\begin{equation}\label{sphineq}
\sum\limits_{p=1}^{\binom{N}{k}}\mathrm{vol}_k(\mathcal{V}_{\mathcal{S}_{k,p}})\leq\binom{N}{k}W^k.
\end{equation}

Due to the spherical symmetry of the AWGN pdf, it is

\begin{equation}\label{regineq}
\int_Dp(\mathbf{z_k})\mathrm{d}(\mathbf{z_k})\leq\int_{\mathcal{B}_D}p(\mathbf{z_k})\mathrm{d}\mathbf{z_k},
\end{equation}
when $\mathrm{vol}_k(D)=\mathrm{vol}_k(\mathcal{B}_D)$, as in
\cite{Fabregas1}. In (\ref{regineq}) $D$ is a random
$k$-dimensional region of integration and $\mathcal{B}_D$ is a
$k$-dimensional sphere of the same volume. Thus, from (\ref{jk})
and (\ref{regineq}), it holds that

\begin{equation}\label{smallspheres}
J_{k,p}=\int_{\mathcal{V}_{\mathcal{S}_{k,p}}}
p(\mathbf{z_k})\mathrm{d}\mathbf{z_k}\leq
\int_{\mathcal{B}(\mathcal{S}_{k,p})}
p(\mathbf{z_k})\mathrm{d}\mathbf{z_k},
\end{equation}
where $\mathcal{B}(\mathcal{S}_{k,p})$ is a sphere with volume
$\mathrm{vol}_k\left(\mathcal{B}(\mathcal{S}_{k,p})\right)=\mathrm{vol}_k\left(\mathcal{V}_{\mathcal{S}_{k,p}}\right)$.
Subsequently,

\begin{equation}\label{sumsmallspheres}
\sum\limits_{p=1}^{\binom{N}{k}}J_{k,p}\leq\sum\limits_{p=1}^{\binom{N}{k}}
\int_{\mathcal{B}(\mathcal{S}_{k,p})}
p(\mathbf{z_k})\mathrm{d}\mathbf{z_k}=\sum\limits_{p=1}^{\binom{N}{k}}\left(1-\frac{\Gamma\left(\frac{k}{2},\frac{R_{\mathcal{S}_{k,p}}^2}{2}\rho\right)}{\Gamma\left(\frac{k}{2}\right)}\right),
\end{equation}
where $R_{\mathcal{S}_{k,p}}$ is the radius of the sphere
$\mathcal{B}(\mathcal{S}_{k,p})$. From (\ref{sphineq}), and using
that
$\mathrm{vol}_k\left(\mathcal{B}(\mathcal{S}_{k,p})\right)=\frac{\pi^{\frac{k}{2}}R_{\mathcal{S}_{k,p}}^k}{\Gamma\left(\frac{k}{2}+1\right)}$
as in (\ref{multivol}), it is

\begin{equation}\label{radiusineq1}
\sum\limits_{p=1}^{\binom{N}{k}}\frac{\pi^{\frac{k}{2}}R_{\mathcal{S}_{k,p}}^k}{\Gamma\left(\frac{k}{2}+1\right)}\leq\binom{N}{k}W^k,
\end{equation}
or by taking into account (\ref{radius}) for $0<k<N$,

\begin{equation}\label{radiusineq2}
\sum\limits_{m=1}^{\binom{N}{k}}R_{\mathcal{S}_{k,p}}^k\leq\binom{N}{k}R_k^k.
\end{equation}

Now, if $a,b$ are positive real numbers, the function
$f(x;a,b)=\Gamma \left( a,b x^{1/a} \right)$ is convex in
$(0,\infty)$. Indeed

\begin{equation}\label{gammad}
\frac{\partial{f}}{\partial{x}}= (b x^{1/a})^{a-1}e^{-b
x^{1/a}}\frac{\partial{(b x^{1/a})}}{\partial{x}}=-\frac{b^a e^{-b
x^{\frac{1}{a}}}}{a}
\end{equation}
and

\begin{equation}\label{gammadd}
\frac{\partial^2{f}}{\partial{x^2}}=\frac{b^{a+1}
x^{\frac{1}{a}-1} e^{-b x^{\frac{1}{a}}}}{a^2} > 0,\,\,\forall
x>0.
\end{equation}

Thus from Jensen's Inequality for convex functions \cite{Hardy}
\begin{equation}\label{jensen}
\sum_{i=1}^L {\Gamma \left( a,b {x_i}^{1/a} \right)} \geq L\Gamma
\left( a,b \left( \sum_{i=1}^L {x_i} /L \right) ^{1/a} \right).
\end{equation}

For $a=\frac{k}{2}$ , $b=\frac{\rho}{2}$, $L=\binom{N}{k}$ and
$x_i={R^k_{\mathcal{S}_{k,p}}}$ we get

\begin{equation}\label{gammaineq1}
\sum_{p=1}^{\binom{N}{k}} {\Gamma \left(
\frac{k}{2},\frac{\rho}{2} R^2_{\mathcal{S}_{k,p}} \right)} \geq
\binom{N}{k}\Gamma \left( \frac{k}{2},\frac{\rho}{2} \left(
\frac{\sum\limits_{m=1}^{\binom{N}{k}}
R^k_{\mathcal{S}_{k,p}}}{\binom{N}{k}} \right) ^{\frac{2}{k}}
\right).
\end{equation}

From (\ref{radiusineq2}) and since $f(x;a,b)=\Gamma \left( a,b
x^{1/a} \right)$ is a decreasing function

\begin{equation}\label{gammaineq2}
\Gamma \left( \frac{k}{2},\frac{\rho}{2} \left(
\frac{\sum\limits_{p=1}^{\binom{N}{k}}
R^k_{\mathcal{S}_{k,p}}}{\binom{N}{k}} \right) ^{\frac{2}{k}}
\right)\geq\Gamma \left( \frac{k}{2},\frac{\rho}{2}R^2_k \right).
\end{equation}

From (\ref{gammaineq1}) and (\ref{gammaineq2}), for $0<k<N$

\begin{equation}\label{gammaineq3}
\sum_{p=1}^{\binom{N}{k}} {\Gamma \left(
\frac{k}{2},\frac{\rho}{2} R^2_{\mathcal{S}_{k,p}} \right)}
\geq\binom{N}{k}\Gamma \left( \frac{k}{2},\frac{\rho}{2} R^2_k
\right),
\end{equation}
or equivalently

\begin{equation}\label{gammaineq4}
\sum_{p=1}^{\binom{N}{k}} \left(1-\frac{\Gamma \left(
\frac{k}{2},\frac{\rho}{2} R^2_{\mathcal{S}_{k,p}}
\right)}{\Gamma\left(\frac{k}{2}\right)}\right)
\leq\binom{N}{k}\left(1-\frac{\Gamma \left(
\frac{k}{2},\frac{\rho}{2} R^2_k
\right)}{\Gamma\left(\frac{k}{2}\right)}\right).
\end{equation}

Taking into account (\ref{sumsmallspheres}) and (\ref{gammaineq4})
for some $k$, $0<k<N$,  it yields

\begin{equation}\label{ineq_k}
\sum\limits_{p=1}^{\binom{N}{k}}J_{k,p}\leq\binom{N}{k}\left(1-\frac{\Gamma
\left( \frac{k}{2},\frac{\rho}{2} R^2_k
\right)}{\Gamma\left(\frac{k}{2}\right)}\right)=\binom{N}{k}I_k,
\end{equation}
while for $k=0$, $p=1$ and it holds that $J_0=I_0=1$.

For $k=N$, it is also $p=1$ and from (\ref{radius}) and
(\ref{smallspheres})

\begin{equation}\label{ineq_N}
J_N\leq\left(1-\frac{\Gamma \left( \frac{N}{2},\frac{\rho}{2}
R^2_N \right)}{\Gamma\left(\frac{N}{2}\right)}\right)=I_N.
\end{equation}

Combining (\ref{ineq_k})and (\ref{ineq_N}), multiplying by
$\left(K-1\right)^k$ and summing for all $k$, it yields

\begin{equation}\label{finalinequality}
\sum\limits_{k=0}^N\left(K-1\right)^k\sum\limits_{p=1}^{\binom{N}{k}}J_{k,p}\leq\sum\limits_{k=0}^N\left(K-1\right)^k\binom{N}{k}I_k.
\end{equation}

Using (\ref{psfinal}), (\ref{bound}) and (\ref{finalinequality}),

\begin{equation}\label{proofslb}
P_{mslb}(\rho)\leq P(\rho)
\end{equation}
and this concludes the proof.
\end{proof}

\subsection{Multiple Sphere Upper Bound (MSUB)}\label{upperbounds}

A well known upper bound for infinite lattice constellations,
which is based on the minimum distance between signal points, is
the Sphere Upper Bound (SUB) \cite{Viterbo}

\begin{equation}\label{sub}
P_{sub}(\rho)=1-\int\limits_{\mathcal{G}_N}p(\mathbf{z})\mathrm{d}\mathbf{z},
\end{equation}
where $\mathcal{G}_N$ is an $N$-dimensional sphere, with radius
defined by

\begin{equation}\label{radiussub}
\mathcal{R}^2=\left(\frac{d_{min}}{2}\right)^2=\frac{d_{min}^2}{4},
\end{equation}
with $d_{min}$ being the minimum distance on the infinite lattice
constellation $\Lambda$. That is, the sphere $\mathcal{G}_N$ is
inscribed in the Voronoi cell of the lattice.

When the generator matrix $\mathbf{M}$ is constructed by the basis
vectors $\mathbf{v_i}$, $i=1,2,\ldots,N$ of the minimum possible
norms, the minimum distance $d_{min}$ can be directly evaluated by
$d_{min}=\min_{i}\left\|\mathbf{v_i}\right\|$. Although this is
not always the case, the above is valid for the most commonly used
lattices in practical cases, such as the $\mathbb{Z}^N$ lattices.
Especially for the $\mathbb{Z}^N$ lattices, $d_{min}=1$.

The SUB in (\ref{sub}) can be rewritten as

\begin{equation}\label{SUBfinal}
P_{sub}(\rho)=1-\left[1-\frac{\Gamma\left(\frac{N}{2},\frac{\mathcal{R}^2}{2}\rho\right)}{\Gamma\left(\frac{N}{2}\right)}\right]=\frac{\Gamma\left(\frac{N}{2},\frac{\mathcal{R}^2}{2}\rho\right)}{\Gamma\left(\frac{N}{2}\right)}.
\end{equation}

Similarly, based on (\ref{psfinal}) and in the same concept as the
SUB for infinite lattice constellations, we can now provide a
novel upper bound for finite lattice constellations.

\begin{definition}\label{Isub}
We define the integrals
\begin{equation}\label{Iksub1}
\mathcal{I}_k=\int\limits_{\mathcal{G}_k}
p(\mathbf{z_k})\mathrm{d}\mathbf{z_k},\,\,\,\,\,k=0,1,\ldots,N,
\end{equation}
where $\mathcal{G}_k$ is a $k$-dimensional sphere, with radius
defined in (\ref{radiussub}). When $k=0$, we define
$\mathcal{I}_0=J_0=1$.
\end{definition}

The above integrals can be written as \cite{Fabregas1}
\begin{equation}\label{Iksub2}
\mathcal{I}_k=\left\{\begin{array}{cc}
1,&k=0\\
1-\frac{\Gamma\left(\frac{k}{2},\frac{\mathcal{R}^2}{2}\rho\right)}{\Gamma\left(\frac{k}{2}\right)},&k=1,2,\ldots,N.\\
\end{array}\right.
\end{equation}

\begin{theorem}\label{MSUB}
The SEP of a multidimensional finite lattice constellation is
upper bounded by

\begin{equation}\label{boundmsub}
P_{msub}(\rho)=1-\frac{\sum\limits_{k=0}^N{(K-1)^k\binom{N}{k}\mathcal{I}_k}}{K^N},
\end{equation}
where $P_{msub}(\rho)$ is called Multiple Sphere Upper Bound
(MSUB).
\end{theorem}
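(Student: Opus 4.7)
The plan is to mirror the structure of the MSLB proof in Theorem \ref{MSLB}, but to exploit the opposite direction of the sphere-vs-Voronoi-cell comparison: instead of replacing each sublattice Voronoi cell $\mathcal{V}_{\mathcal{S}_{k,p}}$ with an equal-volume circumscribing sphere, I would replace it with an inscribed sphere of radius $\mathcal{R}=d_{min}/2$, which lower-bounds the Gaussian integral rather than upper-bounding it. Starting from the exact expression (\ref{psfinal}) supplied by Theorem \ref{propsum}, it suffices to establish, for each admissible pair $k,p$, the pointwise inequality
\begin{equation}
J_{k,p}\geq \mathcal{I}_k.
\end{equation}
Summing this over $p=1,\ldots,\binom{N}{k}$ gives $\sum_{p}J_{k,p}\geq\binom{N}{k}\mathcal{I}_k$; then weighting by $(K-1)^k$, summing over $k$, dividing by $K^N$, and passing to the complement directly yields $P_{K-PAM}(\rho)\leq P_{msub}(\rho)$, which is (\ref{boundmsub}).

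The pointwise inequality reduces to a single geometric observation: the $k$-dimensional sphere $\mathcal{G}_k$ of radius $\mathcal{R}=d_{min}/2$ is inscribed in $\mathcal{V}_{\mathcal{S}_{k,p}}$. Indeed, the $k$-dimensional sublattice spanned by the basis subset $\mathcal{S}_{k,p}$ is contained in $\Lambda$, so its minimum distance is at least $d_{min}$ and consequently its packing radius is at least $d_{min}/2=\mathcal{R}$. Because $p(\mathbf{z_k})$ is nonnegative, monotonicity of integration over nested regions gives
\begin{equation}
J_{k,p}=\int_{\mathcal{V}_{\mathcal{S}_{k,p}}}p(\mathbf{z_k})\mathrm{d}\mathbf{z_k}\geq\int_{\mathcal{G}_k}p(\mathbf{z_k})\mathrm{d}\mathbf{z_k}=\mathcal{I}_k,
\end{equation}
invoking Definition \ref{Isub}. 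The boundary cases are immediate: for $k=0$ both sides equal $1$ by convention, and for $k=N$ the inequality specializes to the classical SUB, since $\mathcal{G}_N$ is inscribed in $\mathcal{V}_\Lambda$ by the very definition of $\mathcal{R}$ in (\ref{radiussub}).

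The main subtlety, rather than a genuine obstacle, is justifying the inscribed-sphere claim uniformly in $p$ and $k$; this is simply the natural multidimensional extension of the argument that already legitimizes the SUB radius (\ref{radiussub}), and relies only on the hereditary observation that any sublattice of $\Lambda$ inherits at least the packing radius of $\Lambda$. Notably, no Hadamard, Maclaurin, or Jensen machinery is needed here, because the bound is obtained \emph{pointwise} in $(k,p)$ rather than after averaging across the $\binom{N}{k}$ equivalent facet-subsets through a volume constraint. This asymmetry with the MSLB proof is precisely what accounts for the particularly clean closed form of (\ref{boundmsub}) and for the fact that MSUB, unlike SUB, correctly accounts for boundary effects through the combinatorial weights $(K-1)^k\binom{N}{k}$ already supplied by Theorem \ref{propsum}.
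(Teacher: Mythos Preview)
Your proposal is correct and follows essentially the same approach as the paper's proof: both establish the pointwise inequality $J_{k,p}\geq\mathcal{I}_k$ from the observation that the sublattice generated by $\mathcal{S}_{k,p}$ has minimum distance at least $d_{min}$, so the $k$-sphere of radius $d_{min}/2$ is contained in $\mathcal{V}_{\mathcal{S}_{k,p}}$, and then sum over $p$ and $k$ via Theorem~\ref{propsum}. The only cosmetic difference is that the paper passes through an intermediate inscribed sphere $\mathcal{G}(\mathcal{S}_{k,p})$ of radius $d_{min}(\mathcal{S}_{k,p})/2$ before shrinking to $\mathcal{G}_k$, whereas you argue the inclusion $\mathcal{G}_k\subseteq\mathcal{V}_{\mathcal{S}_{k,p}}$ directly; your remark that no Hadamard/Maclaurin/Jensen machinery is needed here is also borne out by the paper's proof.
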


\begin{proof}
If $d_{min}(\mathcal{S}_{k,p})$ is the minimum distance between
signal points on the sublattice defined by the basis vector subset
$\mathcal{S}_{k,p}$, for any $J_{k,p}$, computed on a Voronoi cell
$\mathcal{V}_{\mathcal{S}_{k,p}}$

\begin{equation}\label{smallspheressub}
J_{k,p}=\int_{\mathcal{V}_{\mathcal{S}_{k,p}}}
p(\mathbf{z_k})\mathrm{d}\mathbf{z_k}\geq
\int_{\mathcal{G}(\mathcal{S}_{k,p})}
p(\mathbf{z_k})\mathrm{d}\mathbf{z_k},
\end{equation}
where $\mathcal{G}(\mathcal{S}_{k,p})$ is a $k$-dimensional sphere
with radius
$\mathcal{R}_{\mathcal{S}_{k,p}}=\frac{d_{min}(\mathcal{S}_{k,p})}{2}$.
The sphere $\mathcal{G}(\mathcal{S}_{k,p})$ is inscribed in the
Voronoi cell $\mathcal{V}_{\mathcal{S}_{k,p}}$. It is generally
valid that $d_{min}(\mathcal{S}_{k,p})\geq d_{min}$, where
$d_{min}$ is the minimum distance on the lattice defined by the
basis vector set $\mathcal{S}_N$. This is straightforward, since
$\mathcal{S}_{k,p}\subseteq\mathcal{S}_N$.

Thus,

\begin{equation}\label{sphereineq}
\int_{\mathcal{G}(\mathcal{S}_{k,p})}
p(\mathbf{z_k})\mathrm{d}\mathbf{z_k}\geq\int_{\mathcal{G}_k}
p(\mathbf{z_k})\mathrm{d}\mathbf{z_k},
\end{equation}
where $\mathcal{G}_k$ is a $k$-dimensional sphere with radius
$\mathcal{R}=\frac{d_{min}}{2}$, as defined in (\ref{radiussub}).
The sphere $\mathcal{G}_k$ is always smaller or at the most equal
to the inscribed sphere of the Voronoi cell
$\mathcal{V}_{\mathcal{S}_{k,p}}$.

Taking into account (\ref{Iksub1}), (\ref{smallspheressub}) and
(\ref{sphereineq}), it is $ J_{k,p}\geq\mathcal{I}_k$ and
subsequently,

\begin{equation}\label{finalinequalitysub}
\sum\limits_{k=0}^N\left(K-1\right)^k\sum\limits_{p=1}^{\binom{N}{k}}J_{k,p}\geq\sum\limits_{k=0}^N\left(K-1\right)^k\binom{N}{k}\mathcal{I}_k.
\end{equation}

From (\ref{psfinal}), (\ref{boundmsub}) and
(\ref{finalinequalitysub}),

\begin{equation}\label{proofsub}
P_{msub}(\rho)\geq P(\rho)
\end{equation}
and this concludes the proof.
\end{proof}

\section{Numerical Results \& Discussion}\label{results}

In this section we illustrate the accuracy and tightness of the
proposed lower and upper bounds, MSLB and MSUB, respectively, in
comparison with the SEP, as approximated by Monte-Carlo
simulation, for various finite lattice constellations in AWGN
channels. We also compare the MSLB and MSUB with the existing
bounds for the infinite lattice constellations, the SLB and SUB.
The lattice constellations most commonly used in practical cases
are those carved from $\mathbb{Z}^N$ lattices, due to the easy
Gray coded bit labeling. In the following, apart from
$\mathbb{Z}^N$ lattices, the $\mathbb{A}^2$, $\mathbb{E}^4$ and
$\mathbb{E}^8$ are also illustrated, as an example of lattice
structures different from the orthogonal constellations. These
schemes usually achieve better SEP but they cannot be labeled with
a Gray code.

Fig. \ref{Fig:z2-4pam} illustrates the performance of a
$\mathbb{Z}^2$ 4-PAM constellation, which is a simple case of
lattice constellations, most commonly named as 16-Square
Quadrature Amplitude Modulation (16-SQAM). The simulated SEP of
the constellation in the AWGN channel is plotted in conjunction
with the corresponding MSLB and MSUB for various values of the
SNR, $\rho=\frac{1}{\sigma^2}$. For the $\mathbb{Z}^N$ lattices,
the generator matrix is $\mathbf{M}=\mathbf{I}_N$, where
$\mathbf{I}_N$ is the $N\times N$ identity matrix, while
$W=d_{min}=1$. It is evident that the MSLB acts as a lower bound,
while the MSUB acts as an upper bound, for all values of $\rho$.
Both bounds are very tight and can be effectively used to assess
the performance of the $\mathbb{Z}^2$ 4-PAM constellation.
Compared to the existing SLB, the proposed MSLB corresponds better
to the actual performance of the constellation. Furthermore it is
evident that the SLB does not act as a lower bound for SNR values
lower than $15$dB, whereas the SLB becomes less tight than the
MSLB for SNR values higher than $17$dB. Finally, although the
existing SUB is an upper bound to the actual performance, the MSUB
is almost $0.5$dB tighter than the SUB.

Fig. \ref{Fig:z2-32pam} shows the performance of a $\mathbb{Z}^2$
32-PAM constellation. It is clearly illustrated that both the MSLB
and the MSUB bound the performance of the lattice and they are
still very tight, even if the rank of the $K$-PAM increases. In
this situation, the MSLB is almost in accordance with the SLB, and
the MSUB with the SUB respectively. This is because the inner
points are approximated in the same way and the ratio inner/outer
points on the constellation is higher than that of the 4-PAM
constellation. This implies that, for a specific dimension $N$, as
$K$ increases, the MSLB converges to the corresponding SLB, and
the MSUB converges to the SUB.

Figs. \ref{Fig:z4-4pam} and \ref{Fig:z8-4pam} depict the
performance of a $\mathbb{Z}^4$ 4-PAM and a $\mathbb{Z}^8$ 4-PAM
respectively, together with the corresponding MSLBs, MSUBs, SLBs
and SUBs. Comparing with Fig. \ref{Fig:z2-4pam}, where a
$\mathbb{Z}^2$ 4-PAM is illustrated, it is evident that, for a
specific $K$, as the dimension decreases, the bounds become more
tight. Still, for both dimensions, the proposed bounds are tighter
than the existing SLBs and SUBs, while for the SLB we can also see
that for low SNR values, it does not act as a bound. Moreover,
since MSLB and SLB diverge from each other for high SNR values,
the results also suggest that the MSLB has different diversity
order than the SLB, corresponding better to the diversity order of
the actual performance of the constellations.

In the following figures, the performance of some non orthogonal
lattices is depicted, in order to highlight the efficiency of the
MSLB and MSUB for various lattice structures. In Fig.
\ref{Fig:a2-4pam}, a $\mathbb{A}^2$ 4-PAM is illustrated. The
generator matrix is given by \cite{Conway}

\begin{equation}
\mathbf{M} = \left[ \begin{array}{cc}
\sqrt{\frac{2}{\sqrt{3}}}&\sqrt{\frac{1}{2\sqrt{3}}}\\
0&\sqrt{\frac{3}{2\sqrt{3}}} \end{array}\right],
\end{equation}
and thus $W=\sqrt{\frac{2}{\sqrt{3}}}$ and
$d_{min}=\sqrt{\frac{2}{\sqrt{3}}}$. Once again it is clear that
both the MSLB and MSUB are reliable and tight, in constrast to the
SLB and SUB. Specifically, the corresponding SLB is not a lower
bound for this case, for all SNR values considered. Moreover, the
proposed bounds are more tight than the case of $\mathbb{Z}^N$
lattices. This can be attributed to the structure of the
$\mathbb{A}^2$ lattice, since the Voronoi cells of these lattices
are regular polytopes, which are better approximated by the
spheres, used both in MSLB and MSUB.

In Fig. \ref{Fig:a2-32pam}, the rank $K$ of the $\mathbb{A}^2$
lattice is increased from $K=4$ to $K=32$.  Again, as $K$
increases, MSLB and MSUB converge to the corresponding SLB and
SUB, maintaining their accuracy and tightness.

In Figs. \ref{Fig:e4-4pam} and \ref{Fig:e8-4pam}, the lattices
$\mathbb{E}^4$ 4-PAM and $\mathbb{E}^8$ 4-PAM are presented
\cite{Conway,Thompson}. The generator matrices are given in
(\ref{E4E8}), while $W=d_{min}=\frac{2}{8^\frac{1}{4}}$ for $N=4$,
and $W=\frac{2+7\sqrt{2}}{8}$ and $d_{min}=\sqrt{2}$ for $N=8$.
Both MSLB and MSUB act as tight bounds, in contrast to the
corresponding SLB and SUB, while they are tighter than the
corresponding cases of the $\mathbb{Z}^N$ lattices.

\begin{equation}\label{E4E8}
\mathbf{M}_{\mathbb{E}^4} =\frac{1}{8^\frac{1}{4}} \left[
\begin{array}{cccc} 1&2&0&0\\1&0&2&0\\1&0&0&2\\1&0&0&0
\end{array}\right],\,\,\,\,\,\,\,\,\, \mathbf{M}_{\mathbb{E}^8}
=\left[ \begin{array} {cccccccc}
2&-1&0&0&0&0&0&1/2\\
0&1&-1&0&0&0&0&1/2\\
0&0&1&-1&0&0&0&1/2\\
0&0&0&1&-1&0&0&1/2\\
0&0&0&0&1&-1&0&1/2\\
0&0&0&0&0&1&-1&1/2\\
0&0&0&0&0&0&1&1/2
\\0&0&0&0&0&0&0&1/2 \end{array}\right].
\end{equation}

\section{Conclusions}\label{conclusions}

We studied the error performance of finite lattice constellations
via a combinatorial geometrical approach. First we presented an
analytical expression for the exact SEP of these signal sets,
which is then used to introduce two novel closed-form bounds,
called Multiple Sphere Lower Bound (MSLB) and Multiple Sphere
Upper Bound (MSUB). The accuracy and tightness of MSLB and MSUB
have been illustrated in comparison with the simulated SEP of
various constellations of different lattice structure, dimension
and rank. The proposed bounds are tighter to the actual
performance, compared to the SLB and SUB which are often used as
approximations for the finite case. The presented approach can be
extended to multidimensional signal sets distorted by fading, as
presented in Part II. Since these constellations illustrate
substantial diversity gains, the proposed analytical framework and
its extension to fading channels becomes an important and
efficient tool for their design and performance evaluation.

\newpage

\begin{figure}[h!]
\centering\includegraphics[keepaspectratio,width=5in]{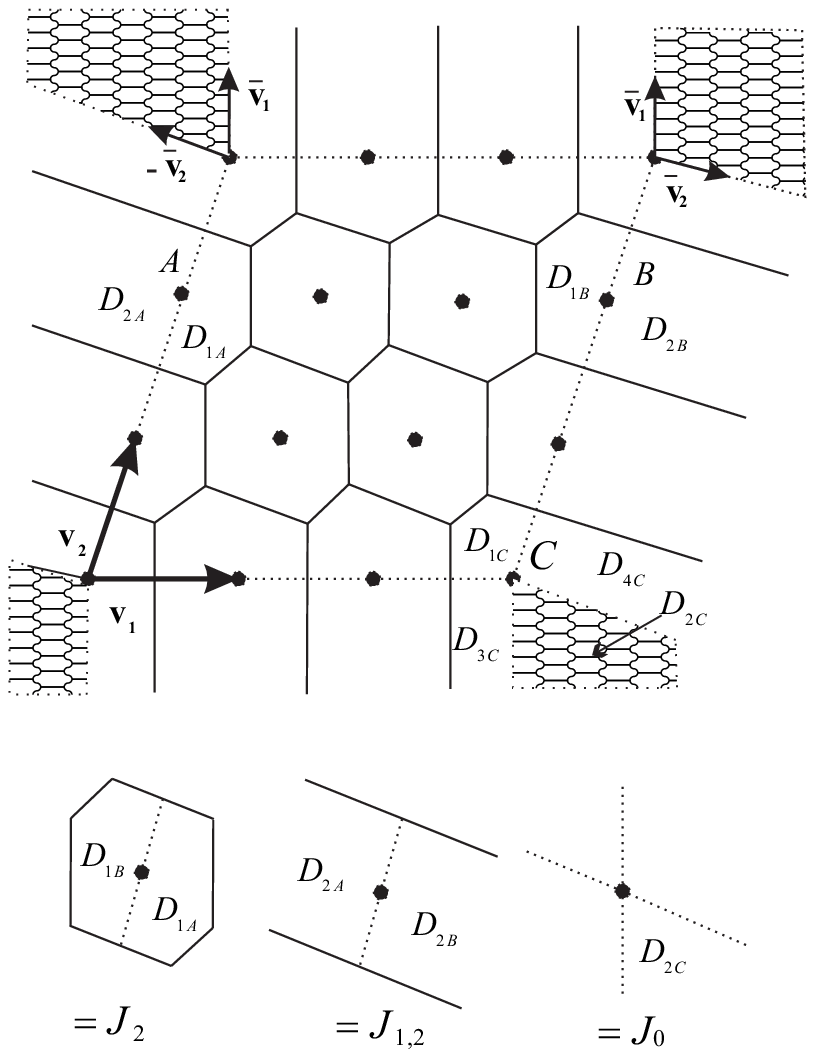}
\center\caption{2D Lattice and Decision Region Combining}
\label{Fig:2DL}
\end{figure}

\begin{figure}[h!]
\centering\includegraphics[keepaspectratio,width=6in]{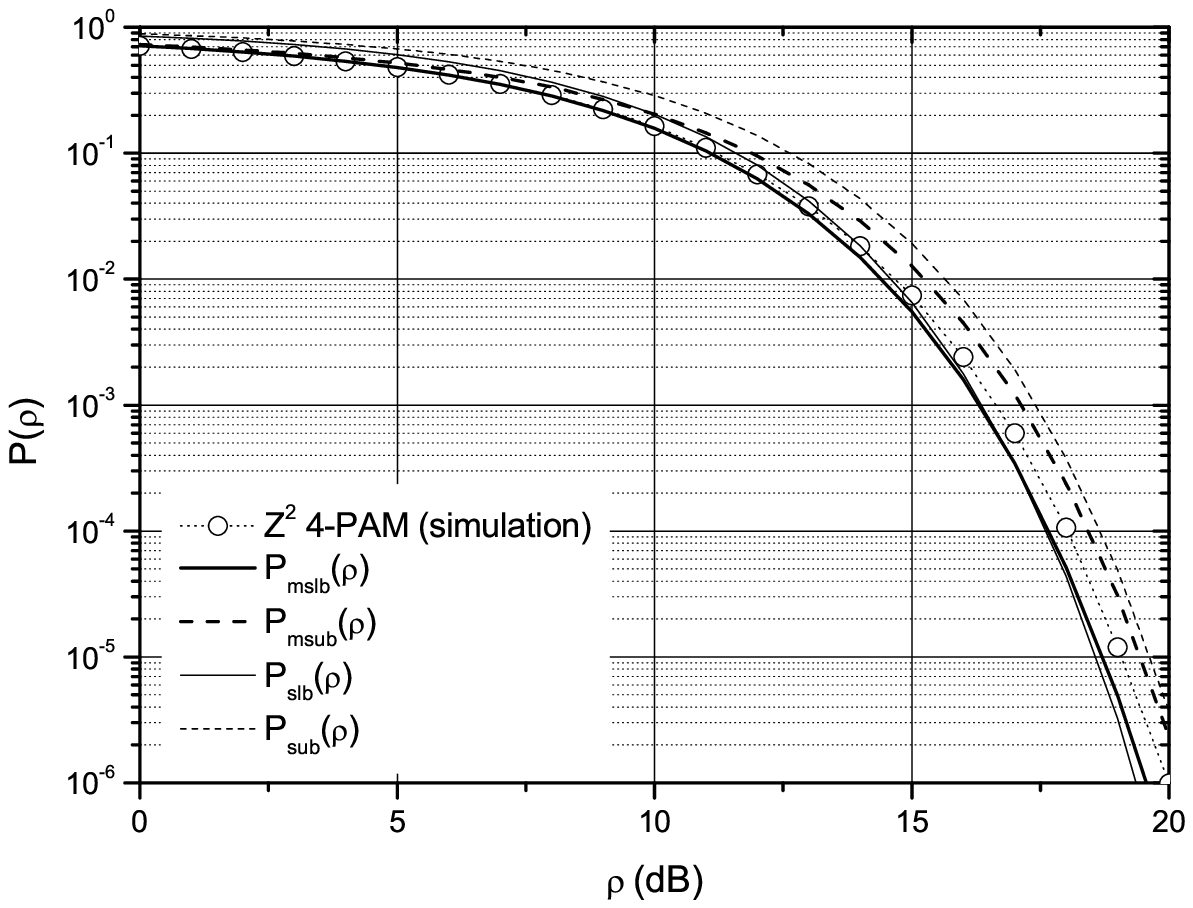}
\center\caption{Symbol Error Probability, MSLB and MSUB for the
$\mathbb{Z}^2$ $4-PAM$ constellation and SLB and SUB for the
$\mathbb{Z}^2$ lattice.} \label{Fig:z2-4pam}
\end{figure}

\begin{figure}[h!]
\centering\includegraphics[keepaspectratio,width=6in]{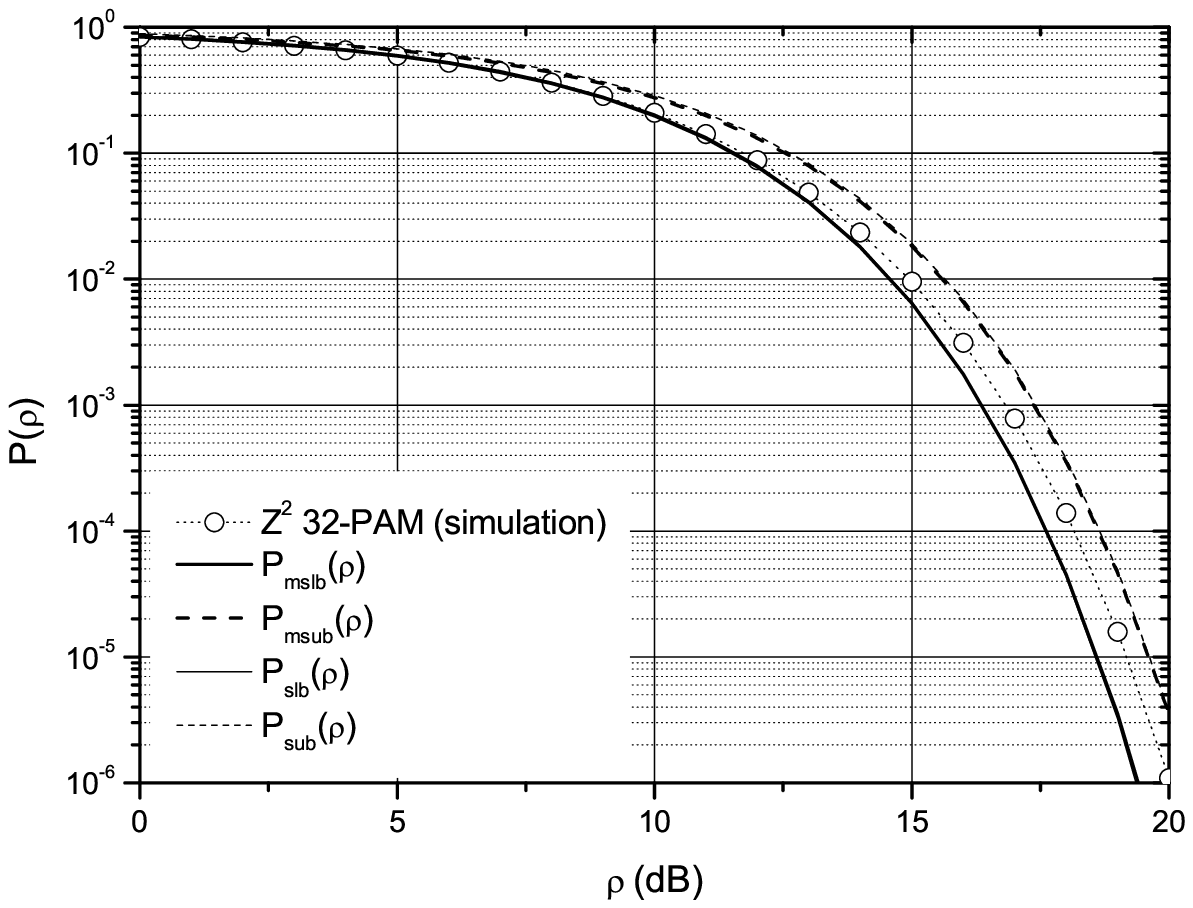}
\center\caption{Symbol Error Probability, MSLB and MSUB for the
$\mathbb{Z}^2$ $32-PAM$ constellation and SLB and SUB for the
$\mathbb{Z}^2$ lattice.} \label{Fig:z2-32pam}
\end{figure}

\begin{figure}[h!]
\centering\includegraphics[keepaspectratio,width=6in]{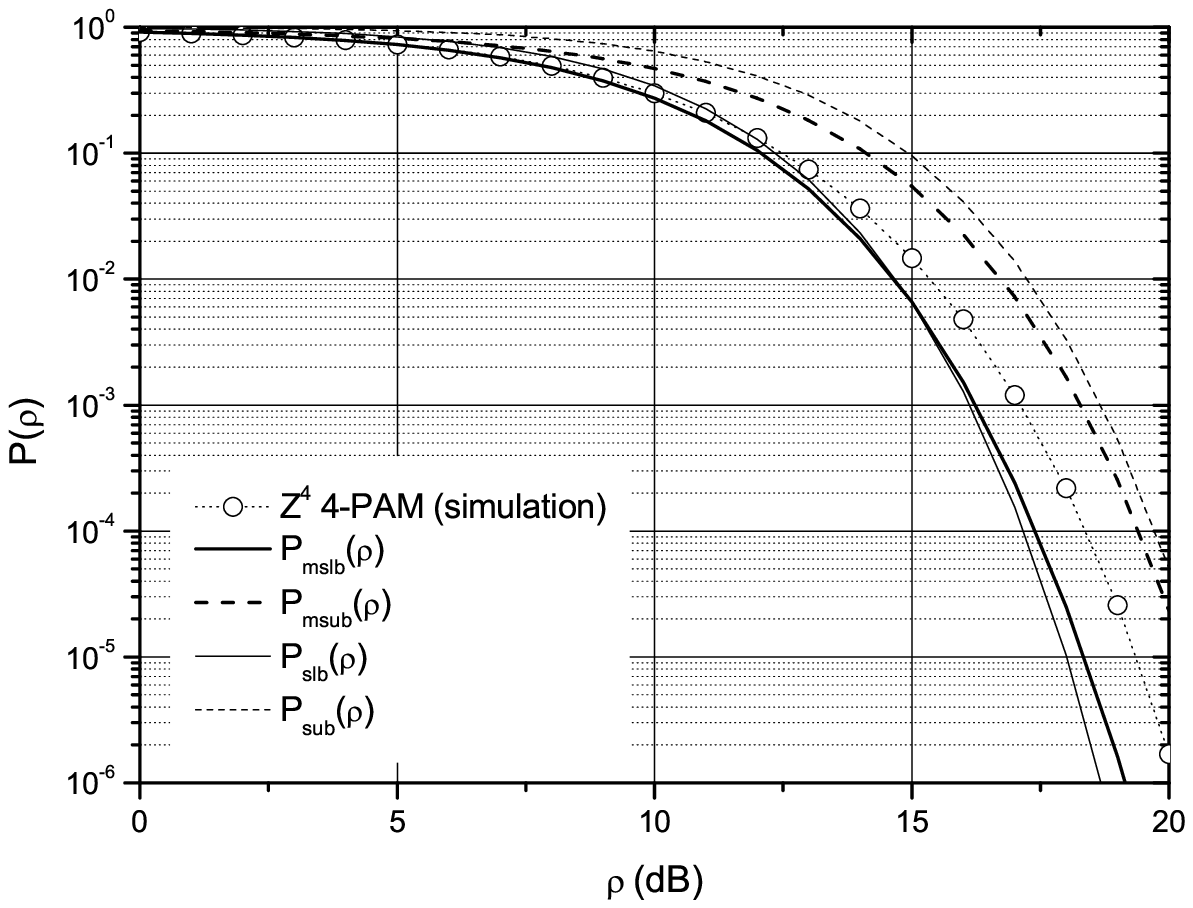}
\center\caption{Symbol Error Probability, MSLB and MSUB for the
$\mathbb{Z}^4$ $4-PAM$ constellation and SLB and SUB for the
$\mathbb{Z}^4$ lattice.} \label{Fig:z4-4pam}
\end{figure}

\begin{figure}[h!]
\centering\includegraphics[keepaspectratio,width=6in]{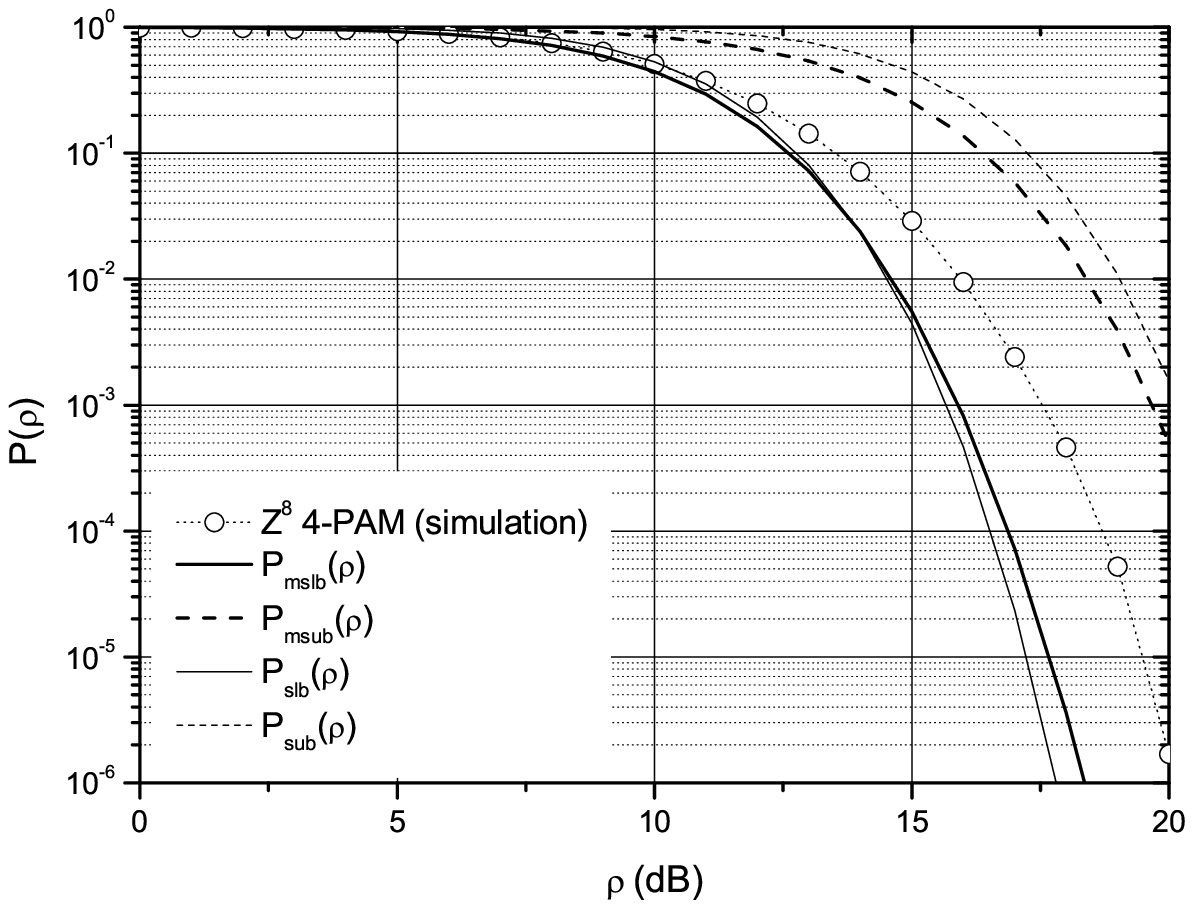}
\center\caption{Symbol Error Probability, MSLB and MSUB for the
$\mathbb{Z}^8$ $4-PAM$ constellation and SLB and SUB for the
$\mathbb{Z}^8$ lattice.} \label{Fig:z8-4pam}
\end{figure}

\begin{figure}[h!]
\centering\includegraphics[keepaspectratio,width=6in]{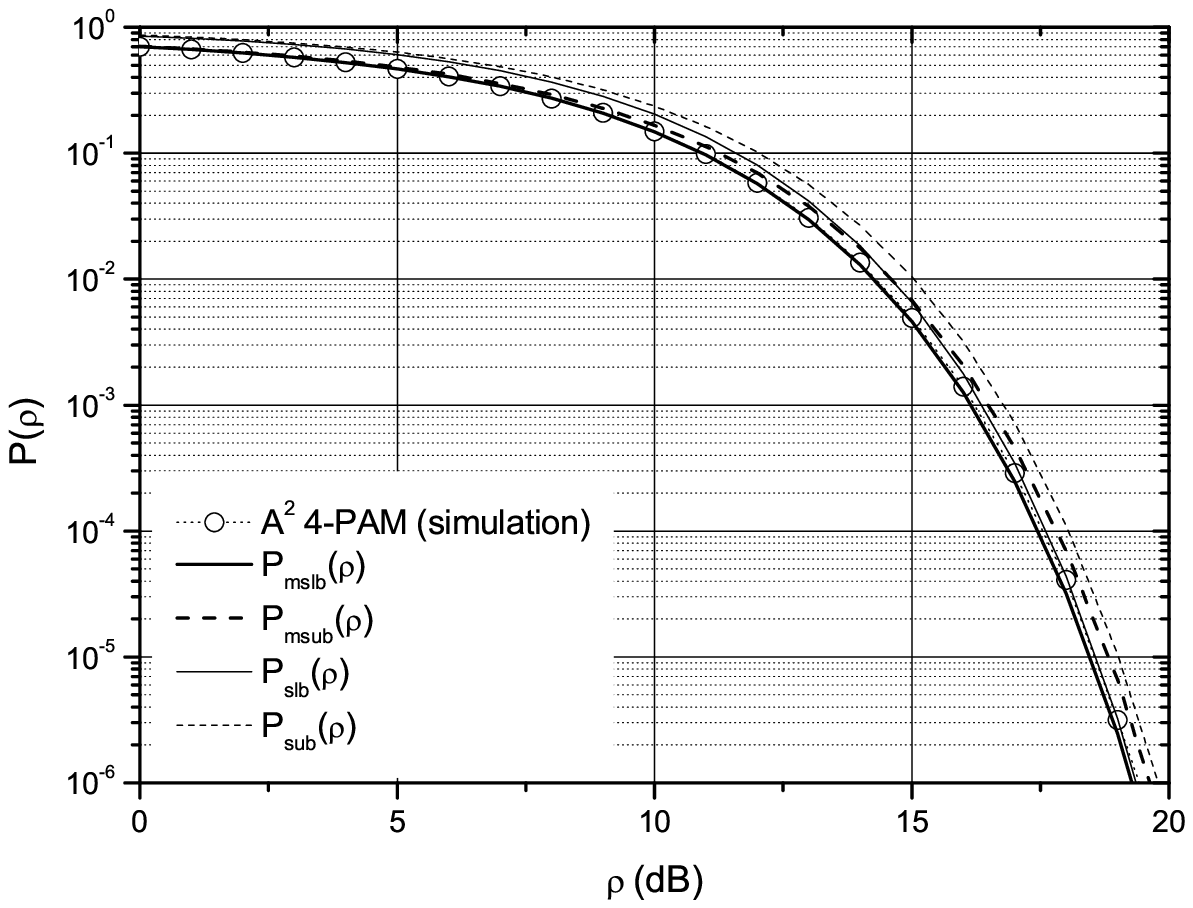}
\center\caption{Symbol Error Probability, MSLB and MSUB for the
$\mathbb{A}^2$ $4-PAM$ constellation and SLB and SUB for the
$\mathbb{A}^2$ lattice.} \label{Fig:a2-4pam}
\end{figure}

\begin{figure}[h!]
\centering\includegraphics[keepaspectratio,width=6in]{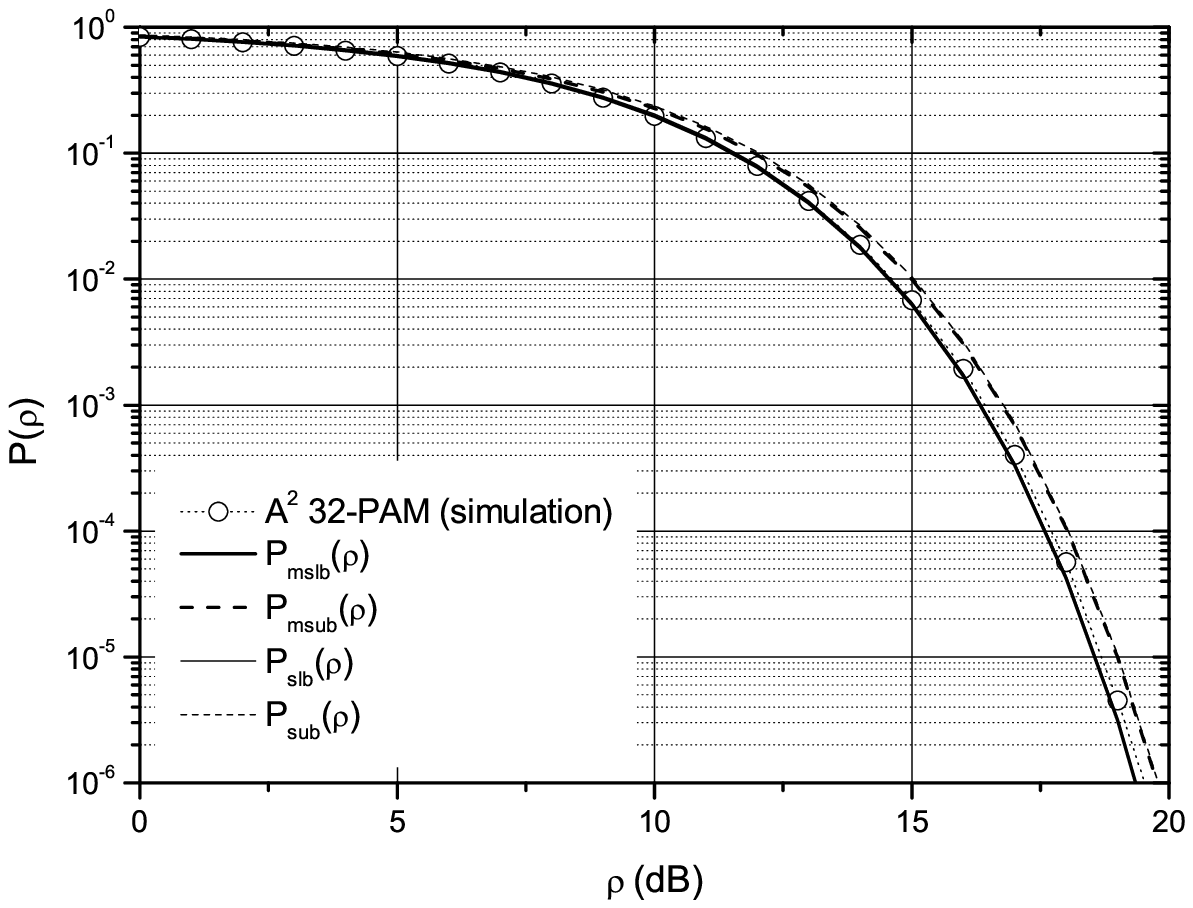}
\center\caption{Symbol Error Probability, MSLB and MSUB for the
$\mathbb{A}^2$ $32-PAM$ constellation and SLB and SUB for the
$\mathbb{A}^2$ lattice.} \label{Fig:a2-32pam}
\end{figure}

\begin{figure}[h!]
\centering\includegraphics[keepaspectratio,width=6in]{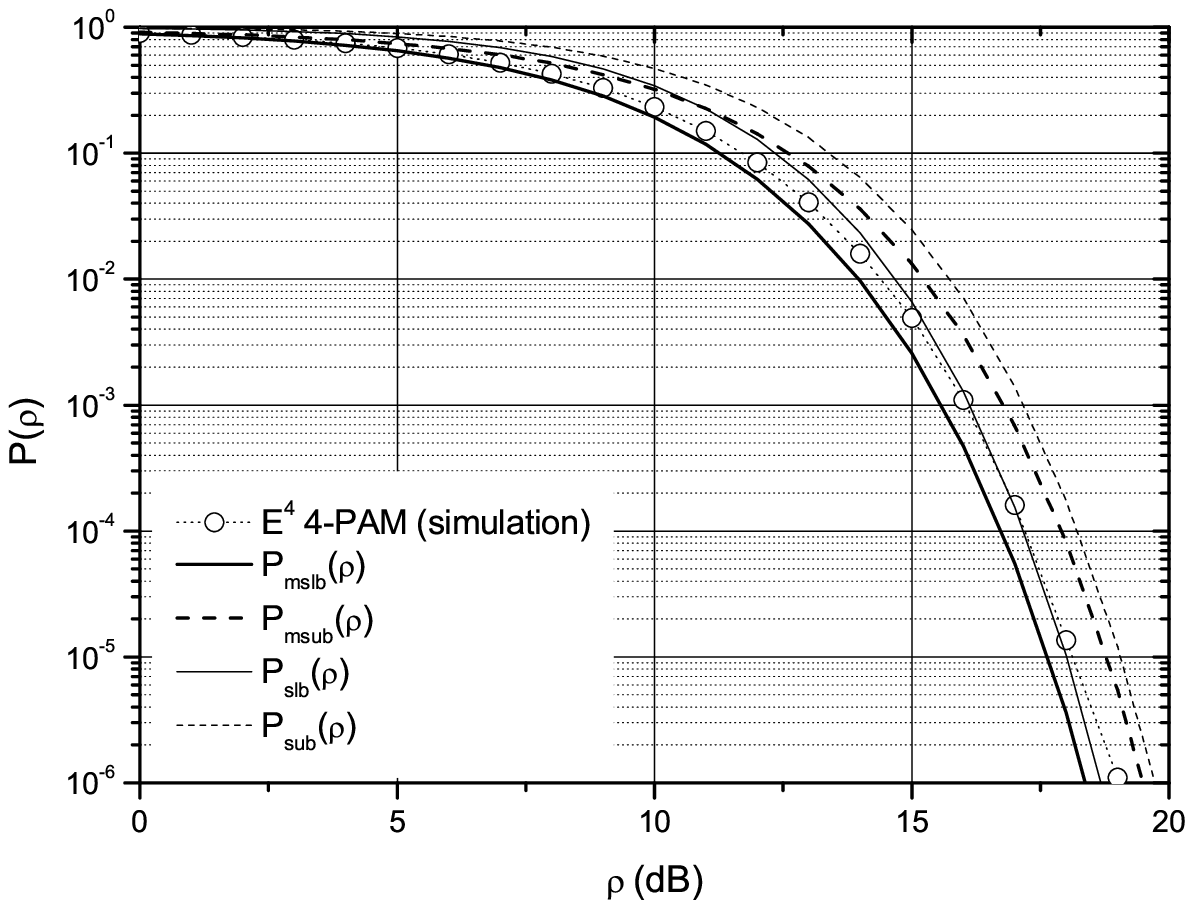}
\center\caption{Symbol Error Probability, MSLB and MSUB for the
$\mathbb{E}^4$ $4-PAM$ constellation and SLB and SUB for the
$\mathbb{E}^4$ lattice.} \label{Fig:e4-4pam}
\end{figure}

\begin{figure}[h!]
\centering\includegraphics[keepaspectratio,width=6in]{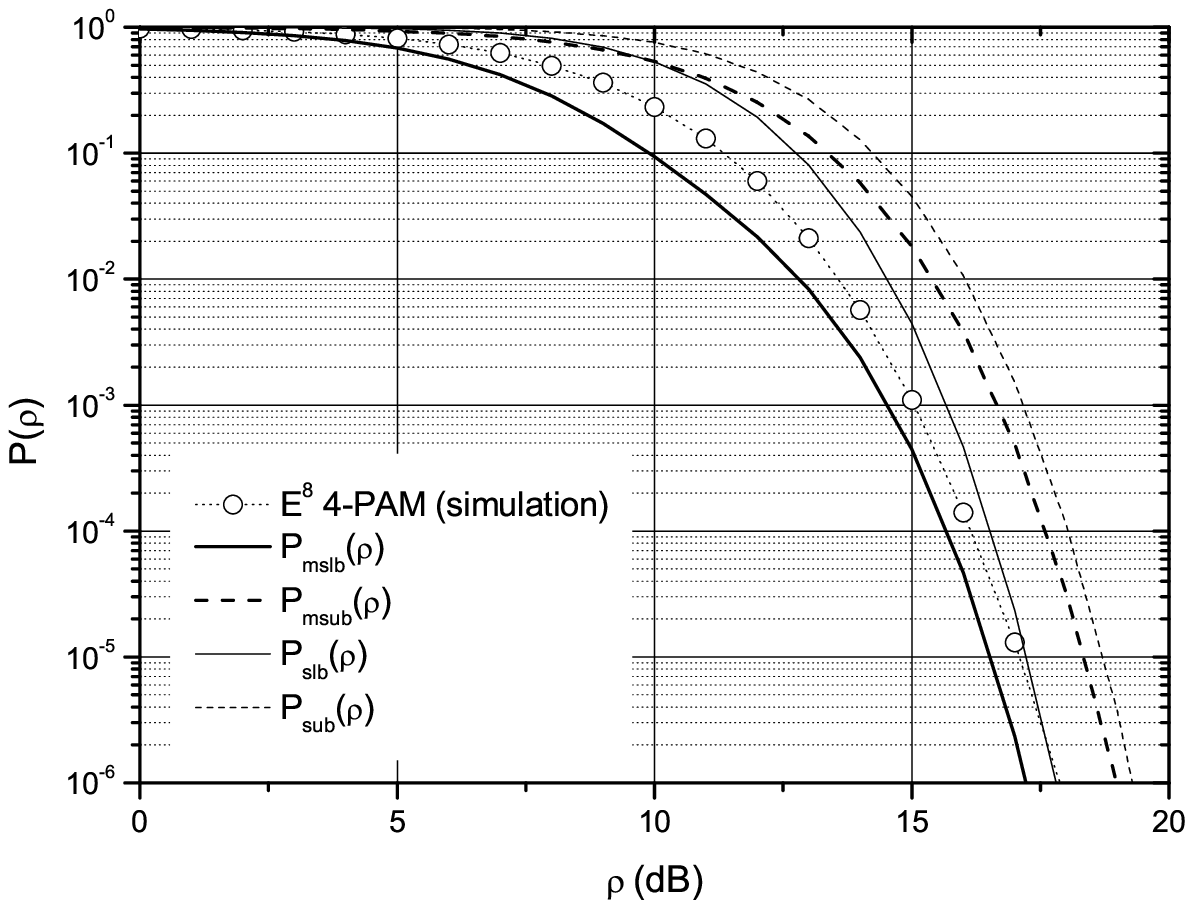}
\center\caption{Symbol Error Probability, MSLB and MSUB for the
$\mathbb{E}^8$ $4-PAM$ constellation and SLB and SUB for the
$\mathbb{E}^8$ lattice.} \label{Fig:e8-4pam}
\end{figure}

\end{document}